\theoremstyle{definition}
\newtheorem{proposition}{Proposition}
\newtheorem{customtheorem}{Theorem}
\begin{document}

\title{Wasserstein Robust Market Making via Entropy Regularization}

\author{
    Zhou Fang \\ Department of Mathematics, The University of Texas at Austin 
    \and
    Arie Israel \\ Department of Mathematics, The University of Texas at Austin
}

\maketitle

\begin{abstract}
    In this paper, we introduce a robust market making framework based on Wasserstein distance, utilizing a stochastic policy approach enhanced by entropy regularization. We demonstrate that, under mild assumptions, the robust market making problem can be reformulated as a convex optimization question. Additionally, we outline a methodology for selecting the optimal radius of the Wasserstein ball, further refining our framework's effectiveness.
\end{abstract}

\section{Introduction}
Market-making is a popular trading activity typically undertaken by proprietary trading firms. A market maker sets both bid and ask prices for a particular financial asset, capitalizing on the small spreads for profit. Because each transaction yields only a modest profit, they must engage in frequent trading, linking market-making closely with high-frequency trading strategies. Additionally, market makers are tasked with managing their inventory effectively to mitigate risks associated with sudden market fluctuations. 

The study of market-making was initially pioneered by \cite{ho1981optimal}, but it lay dormant for decades until it was revived by the paper \cite{avellaneda2008high}. Following these seminal works, a wealth of subsequent research has emerged, spanning various fields. This includes option market making, as discussed in \cite{baldacci2021algorithmic} and \cite{stoikov2009option}, FX exchange market-making \cite{barzykin2021market}, and market making in cryptocurrencies \cite{cartea2023decentralised}. Further explorations consider market making under specific constraints or conditions \cite{cartea2019market} and \cite{cartea2020market}. The literature on market-making within the field of financial mathematics is vast and varied. Although each paper offers a unique perspective, collectively, they address market-making through the lens of stochastic control problems, assuming the underlying dynamics are known.

However, in reality, it is often the case that accurate information on the underlying dynamics, such as returns and volatility, is not available. There are few studies that incorporate the uncertainty of these underlying dynamics, likely due to the complexity of addressing such uncertainties. Recent developments in Wasserstein Distributionally Robust Optimization (DRO) offer a promising tool for addressing these challenges. In this project, we employ techniques from Wasserstein DRO to tackle the market-making problem under uncertain dynamics.

The seminal work by Kuhn et al. \cite{kuhn2019wasserstein} has played a pivotal role in bridging the gap between theoretical foundations and practical applications of the Wasserstein distributionally robust optimization (DRO). This celebrated paper not only offers a comprehensive tutorial on the subject but also intricately details the relationship between Wasserstein DRO and various machine learning paradigms, setting the stage for a multitude of research directions. Since its publication, there has been a surge in scholarly interest towards exploring the depth and breadth of Wasserstein DRO, evidenced by several influential studies. Notably, a series of papers by Blanchet et al., including \cite{blanchet2019robust}, \cite{blanchet2022confidence}, and \cite{blanchet2021statistical}, alongside Mohajerin and Kuhn \cite{mohajerin2018data}, have conducted rigorous investigations into data-driven approaches within the framework of Wasserstein DRO. These works collectively advance our understanding of how data can inform more robust and reliable optimization strategies, particularly in the context of uncertain environments.

Moreover, the evolution of this research area has seen significant contributions aimed at generalizing and extending the strong duality results inherent in Wasserstein DRO. Papers such as those by Gao et al. \cite{gao2022wasserstein}, \cite{gao2023distributionally}, and Yang and Gao \cite{yang2022wasserstein}, have been instrumental in elucidating more tractable and scalable approaches to implementing Wasserstein DRO in larger, more complex systems. Their work expands the theoretical underpinnings of strong duality and provides a clearer path towards practical implementation, enhancing the applicability of Wasserstein DRO across various domains, including finance, \cite{blanchet2022distributionally}

\section{Model Setup}

In the market-making problem, a key aspect is modeling the arrival process of market orders. This is often represented as a Poisson process, where the arrival intensity reflects the mean arrival rate. Following traditional market-making literature, the intensities of market order arrivals are influenced by the bid and ask prices and their proximity to the mid-price. Specifically, the closer the bid (ask) price is to the mid-price, the higher the likelihood of a market sell (buy) order being executed.

We first discuss several observed numerical phenomena in this context and then outline our specific assumptions regarding market order arrival intensities. We aim to set assumptions on the $\mathbb{R}$-valued random variables \( \Delta N^+(\epsilon^+) \) and \( \Delta N^-(\epsilon^-) \), where \( \Delta N^+(\epsilon^+) \) represents the number of market buy orders given the current ask spread \( \epsilon^+ \), and \( \Delta N^-(\epsilon^-) \) represents the number of market sell orders given the current bid spread $\epsilon^-$ for the next period of time $\Delta t$. Here, we assume $\epsilon^+ > 0$ and $\epsilon^- > 0$.

We assume the random variables \( \Delta N^+(\epsilon^+) \) and \( \Delta N^-(\epsilon^-) \) are distributed according to probability distributions \( \mathbb{Q}^+(\epsilon^+) \) and \( \mathbb{Q}^-(\epsilon^-) \) on $\mathbb{R}$, respectively. These parametric families of probability distributions capture the uncertainty in market order arrivals for different spreads. When the dependence on the spreads is understood from context, we write the distributions as $\mathbb{Q}^+$ and $\mathbb{Q}^-$. We will denote the spread variables as $\boldsymbol{\epsilon} = (\epsilon^+,\epsilon^-)$.

The initial formulation considers the following optimization problem:
\begin{align*}
    & \max_{\epsilon^+, \epsilon^-} \quad  \mathbb{E}^{\mathbb{Q}^+ \otimes \mathbb{Q}^-} \Bigg[ (S + \epsilon^+) \Delta N^+(\epsilon^+) - (S - \epsilon^-) \Delta N^-(\epsilon^-) - \gamma \left( Q + \Delta Q(\boldsymbol{\epsilon}) \right)^2 \Bigg], \\
    & \qquad \text{where} \quad \Delta Q(\boldsymbol{\epsilon}) = \Delta N^+(\epsilon^+) - \Delta N^-(\epsilon^-).
\end{align*}

In the above:

\begin{itemize}
    \item \( S \) is the asset's mid-price.
    \item \( Q \) is the current inventory of the asset.
    \item \( \epsilon^+ \) and \( \epsilon^- \) are the ask and bid spreads, respectively.
    \item \( \Delta N^+(\epsilon^+) \sim \mathbb{Q}^+(\epsilon^+) \) and \( \Delta N^-(\epsilon^-) \sim \mathbb{Q}^-(\epsilon^-) \).
    \item \( \gamma \geq 0 \) is the penalty parameter for holding inventory in the next period.
\end{itemize}
In the objective function, the term 
\[
(S + \epsilon^+) \Delta N^+(\epsilon^+) - (S - \epsilon^-) \Delta N^-(\epsilon^-)
\] 
represents the change in cash. Additionally, since \( S \) is the current asset price, we assume that in the next period (after a time interval \( \Delta t \)), the asset price becomes \( S + \Delta S \), where \( \Delta S \) is the random change in the asset price. Here, \( \Delta S \) follows a probability measure that is independent of the measures of $\mathbb{Q}^+(\epsilon^+)$ and $\mathbb{Q}^-(\epsilon^-)$.

If we assume that the expected change in the asset price is zero, then it follows that the expected change in cash is equivalent to the expected change in total wealth. Finally, the term 
\[
-\gamma \left( Q + \Delta Q \right)^2
\] 
penalizes holding inventory in the next period, discouraging excessive accumulation or depletion of assets.

In the above classical setting, the policy is deterministic, meaning that given the current state \( (S, Q) \), the policy prescribes a deterministic bid and ask spread \(  \epsilon^+(S, Q) \) and \( \epsilon^-(S, Q) \), respectively. However, assuming a deterministic policy not only limits our choice of possible policies but also poses difficulties when modifying the market-making problem into a robust optimization setting. Therefore, we are led to introduce the market-making problem which incorporates a stochastic policy. A \emph{stochastic policy} is a probability distribution over possible bid and ask spreads, determined by the current state. Specifically, we define a stochastic policy \( \boldsymbol{\pi}(\boldsymbol{\epsilon} | S, Q) \) to be a probability density function (pdf) for the variables \( \boldsymbol{\epsilon}=(\epsilon^+,\epsilon^-) \in (\mathbb{R}^{+})^2 \) given the state \( (S, Q) \). The policy \( \boldsymbol{\pi}(\boldsymbol{\epsilon} | S, Q) \) determines the probability of choosing specific spreads \( \boldsymbol{\epsilon}= (\epsilon^+, \epsilon^-) \) given the state \( (S, Q) \). For simplicity, we write $\boldsymbol{\pi}(\boldsymbol{\epsilon}) = \boldsymbol{\pi}(\boldsymbol{\epsilon} | S, Q) $. Thus, $\int_{(\mathbb{R}^+)^2} \boldsymbol{\pi}(\boldsymbol{\epsilon}) d \boldsymbol{\epsilon} = 1$ and $\boldsymbol{\pi}(\boldsymbol{\epsilon}) \geq 0$.

The stochastic market-making problem with entropy regularization is then represented as
\begin{align}
    & \max_{\boldsymbol{\pi}} \;\; \biggl\{\int_{(\mathbb{R}^+)^2} \boldsymbol{\pi}(\boldsymbol{\epsilon}) \, \mathbb{E}^{\mathbb{Q}^+ \otimes \mathbb{Q}^-} \Big[ (S + \epsilon^+) \Delta N^+(\epsilon^+) - (S - \epsilon^-) \Delta N^-(\epsilon^-) - \gamma \left( Q + \Delta Q(\boldsymbol{\epsilon}) \right)^2 \Big] \, d\boldsymbol{\epsilon} \nonumber \\
    & \qquad - \eta \int_{(\mathbb{R}^+)^2} \boldsymbol{\pi}(\boldsymbol{\epsilon} ) \log \boldsymbol{\pi}(\boldsymbol{\epsilon}) \, d\boldsymbol{\epsilon} \biggr\},  \qquad \text{where} \quad \Delta Q(\boldsymbol{\epsilon}) = \Delta N^+(\epsilon^+) - \Delta N^-(\epsilon^-). \label{orig_problem}
\end{align}
Here, \( \eta > 0 \) is an entropy regularization parameter that encourages exploration over the spread settings, preventing the policy from over-committing to specific values of \( \epsilon^+ \) and \( \epsilon^- \). The second integral represents the entropy of the probability distribution $\boldsymbol{\pi}$, and, as usual, we interpret $0 \log(0) = 0$.

Another of the major modifications that will be done is making the above loss function into its robust version. The reason is that it will allow us to treat problems where we do not know the exact distribution $\mathbb{Q}^\pm(\epsilon^\pm)$ of the random variable $\Delta N^\pm(\epsilon^\pm)$, but only have access to estimates for these distributions. We propose the following two assumptions that will be used in this paper. 

\begin{itemize}
    \item \textbf{Assumption 1 (Model Uncertainty): } $\mathbb{Q}^\pm (\epsilon^\pm)$ is the distribution of market buy/sell orders if the proposed bid/ask spread is $\epsilon^\pm$. The assumption in this paper is that $\mathbb{Q}^\pm (\epsilon^\pm)$ is transformed from an unknown meta-distribution $\mathbb{Q}_0^\pm$, where the transformation is shifted by mean $f^\pm(\epsilon^\pm)$ and scaled by standard deviation $h^\pm(\epsilon^\pm)$, as in equation \eqref{eqn:scaled_dist}. The transformations $f^\pm(\epsilon^\pm)$ and $h^\pm(\epsilon^\pm)$, are continuous functions of $\epsilon^\pm$, which are known in advance. We assume that the meta-distributions $\mathbb{Q}_0^\pm$ have zero mean. We do not know the distributions $\mathbb{Q}_0^\pm$ in advance, but we assume that we have access to a sequence of independent samples of these distributions.
\end{itemize}

We let $\Delta\widetilde{N}^\pm$ denote two independent copies of a scalar random variable distributed according to the distribution $\mathbb{Q}_0^\pm$. Then, according to Assumption 1, we shall assume:
\begin{align}\label{eqn:scaled_dist}
    \Delta N^{\pm}(\epsilon^\pm) = h^\pm(\epsilon^\pm)\Delta \widetilde{N}^\pm + f^\pm(\epsilon^\pm).
\end{align}

Under the Wasserstein DRO framework, the next step is to define the uncertainty sets. First, we will need a second assumption on the empirical data

\begin{itemize}
    \item \textbf{
        Assumption 2 (Empirical Data)
    }
    Assume that the distributions of standardized buy and sell order arrivals, denoted by \(\mathbb{Q}_0^+\) and \(\mathbb{Q}_0^-\), are independent but may deviate from their respective empirical distributions. The empirical data for standardized buy and sell orders are given by \(\{\Delta \widetilde{N}_i^+\}_{i=1}^n\) and \(\{\Delta \widetilde{N}_i^-\}_{i=1}^n\), respectively. Notice the empirical data is derived from historical observations, it naturally forms a time series. As time progresses, more data becomes available.  Their corresponding empirical distributions denoted as \(\widehat{\mathbb{Q}}_n^+\) and \(\widehat{\mathbb{Q}}_n^-\). More specifically, the empirical distributions are  

\[
\widehat{\mathbb{Q}}_n^+ = \frac{1}{n} \sum_{i=1}^{n} \delta_{\Delta \widetilde{N}_i^+}, \quad
\widehat{\mathbb{Q}}_n^- = \frac{1}{n} \sum_{i=1}^{n} \delta_{\Delta \widetilde{N}_i^-},
\]
where \(\delta_x\) denotes the Dirac measure centered at \(x\)
\end{itemize}

To capture this uncertainty, each distribution is assigned its own uncertainty set around the empirical distributions, defined as:
\[
\mathcal{U}^+_{n,\delta} = \left\{ \widetilde{\mathbb{Q}}^+ : W_2\left( \widetilde{\mathbb{Q}}^+, \widehat{\mathbb{Q}}_n^+ \right) \leq \delta \right\} \quad \text{and} \quad \mathcal{U}^-_{n,\delta} = \left\{ \widetilde{\mathbb{Q}}^- : W_2\left( \widetilde{\mathbb{Q}}^-, \widehat{\mathbb{Q}}_n^- \right) \leq \delta \right\},
\]
where \(W_2\) represents the Wasserstein-2 distance with the Euclidean cost function.

Since \(\Delta \widetilde{N}^+\) and \(\Delta \widetilde{N}^-\) are assumed to be independent, the combined uncertainty set is represented by the product distribution \(\widetilde{\mathbb{Q}}^+ \otimes \widetilde{\mathbb{Q}}^-\), each independently varying within its respective Wasserstein ball. This structure enables a decision-independent Distributionally Robust Optimization (DRO) framework that accommodates uncertainty in both buy and sell order arrivals.

Now, we are able to discuss the robust version of the optimization problem by considering that the policy needs to be optimal under the worst-case scenario. By unfolding the market order arrival process and substituting the standardized market order arrivals, we modify the optimization problem as follows:

\begin{align}
    &\max_{\boldsymbol{\pi}} \inf_{\widetilde{\mathbb{Q}}^+ \in \mathcal{U}^+_{n, \delta}, \, \widetilde{\mathbb{Q}}^- \in \mathcal{U}^-_{n, \delta}} \, \int_{(\mathbb{R}^+)^2} \boldsymbol{\pi}(\epsilon^+, \epsilon^-) \mathbb{E}^{\widetilde{\mathbb{Q}}^+ \otimes \widetilde{\mathbb{Q}}^-}   \bigg[ (S + \epsilon^+) \Big( h^{+}(\epsilon^+) \Delta \widetilde{N}^+ + f^{+}(\epsilon^+) \Big) \nonumber \\
    - & (S - \epsilon^-) \Big( h^{-}(\epsilon^-) \Delta \widetilde{N}^- + f^{-}(\epsilon^-) \Big) - \eta \Big( Q + f^{+}(\epsilon^+) - f^{-}(\epsilon^-) + h^{+}(\epsilon^+) \Delta \widetilde{N}^+ - h^{-}(\epsilon^-) \Delta \widetilde{N}^- \Big)^2 \bigg] d\epsilon^+ d\epsilon^- \nonumber \\
    - & \gamma \int_{(\mathbb{R}^+)^2} \boldsymbol{\pi}(\epsilon^+, \epsilon^-) \log \boldsymbol{\pi}(\epsilon^+, \epsilon^-) \, d\epsilon^+ d\epsilon^-. \label{orig_problem_2}
\end{align}
Here, as before, we regard the state $(S,Q)$ as fixed, $\eta,\gamma > 0$ are model parameters, and $\Delta \widetilde{N}^+$ and $\Delta \widetilde{N}^-$ are distributed according to the probability measures $\widetilde{\mathbb{Q}}^+$ and $\widetilde{\mathbb{Q}}^-$. For the size of the Wasserstein ball $\delta$, we will introduce a way for picking the optimal in the last section, which involves definition of Wasserstein robust profile.

To simplify notation, we define:
\begin{align}
    A &= (S + \epsilon^+) h^{+}(\epsilon^+) \nonumber \\
    B &= (S - \epsilon^-) h^{-}(\epsilon^-) \nonumber \\
    C &= Q + f^{+}(\epsilon^+) - f^{-}(\epsilon^-)
\end{align}
With these definitions, the optimization problem simplifies to:
\begin{align}
    \max_{\boldsymbol{\pi}} \quad & \inf_{\widetilde{\mathbb{Q}}^+ \in \mathcal{U}^+_{n, \delta}, \, \widetilde{\mathbb{Q}}^- \in \mathcal{U}^-_{n, \delta}} \, \int_{(\mathbb{R}^+)^2} \boldsymbol{\pi}(\epsilon^+, \epsilon^-) \, \mathbb{E}^{\widetilde{\mathbb{Q}}^+ \otimes \widetilde{\mathbb{Q}}^-} \Bigg[ A \Delta \widetilde{N}^+ - B \Delta \widetilde{N}^- \nonumber \\
    & \quad - \eta \left( C + h^{+}(\epsilon^+) \Delta \widetilde{N}^+ - h^{-}(\epsilon^-) \Delta \widetilde{N}^- \right)^2 + (S + \epsilon^+) f^+(\epsilon^+) 
                - (S - \epsilon^-) f^-(\epsilon^-) \Bigg] \, d\epsilon^+ d\epsilon^- \nonumber \\
    & \quad - \gamma \int_{(\mathbb{R}^+)^2} \boldsymbol{\pi}(\epsilon^+, \epsilon^-) \log \boldsymbol{\pi}(\epsilon^+, \epsilon^-) \, d\epsilon^+ d\epsilon^- \label{eqn:main_opt}
\end{align}

\section{Main Results}
In this section, we present our main result: the aforementioned optimization problem, denoted as \eqref{eqn:main_opt} has optimal policy stated in theorem \ref{main_thm}

\begin{customtheorem}\label{main_thm}
    The optimal policy for the problem stated in \eqref{eqn:main_opt} is given by:
    \begin{align}
        \boldsymbol{\pi}^*(\epsilon_0^+, \epsilon_0^-) 
        &=
        \frac{M^*(\epsilon_0^+, \epsilon_0^-)}{\int_{(\mathbb{R}^+)^2} M^*(\epsilon^+, \epsilon^-) \, d\epsilon^+ \, d\epsilon^-}, \nonumber \\ 
        M^*(\epsilon^+, \epsilon^-) 
        &= 
        \exp \Bigg\{ 
            \frac{1}{\gamma} \Bigg[ 
                (A - 2 \eta C h^+(\epsilon^+)) \alpha^{*,+} 
                - (B - 2 \eta C h^-(\epsilon^-)) \alpha^{*,-} \nonumber \\
        & - \eta \Big( h^+(\epsilon^+)^2 \beta^{*,+} - 2 h^+(\epsilon^+) h^-(\epsilon^-) \alpha^{*,+} \alpha^{*,-} 
                + h^-(\epsilon^-)^2 \beta^{*,-} \Big) 
            \Bigg] 
        \Bigg\} L(\epsilon^+, \epsilon^-), \nonumber \\ 
        L(\epsilon^+, \epsilon^-) 
        &= 
        \exp \Bigg\{ 
            \frac{1}{\gamma} \Big[ 
                (S + \epsilon^+) f^+(\epsilon^+) 
                - (S - \epsilon^-) f^-(\epsilon^-) 
                - \eta C^2 
            \Big] 
        \Bigg\}.
    \end{align}
    
    The parameters \(\alpha^{*,+}\), \(\alpha^{*,-}\), \(\beta^{*,+}\), and \(\beta^{*,-}\) in this policy are the optimal solutions of the two-dimensional optimization problem:
    \begin{align}
        &\sup_{\widetilde{\mathbb{Q}}^+ \in \mathcal{U}^+_{n, \delta}, \, \widetilde{\mathbb{Q}}^- \in \mathcal{U}^-_{n, \delta}} -\gamma \int_{(\mathbb{R}^+)^2} M(\epsilon^+, \epsilon^-) \, d\epsilon^+ \, d\epsilon^- \nonumber \\
        \text{subject to} \nonumber \\
        &\quad \alpha^\pm = \mathbb{E}^{\widetilde{\mathbb{Q}}^\pm} [\Delta \widetilde{N}^\pm] \in \left[\alpha_n^\pm - \sqrt{\delta}, \, \alpha_n^\pm + \sqrt{\delta}\right], \nonumber \\
        &\quad \beta^\pm = \mathbb{E}^{\widetilde{\mathbb{Q}}^\pm} [(\Delta \widetilde{N}^\pm)^2] = \Big( \sqrt{\beta_n^\pm - (\alpha_n^\pm)^2} + \sqrt{\delta - (\alpha^\pm - \alpha_n^\pm)^2} \Big)^2 + (\alpha^\pm)^2. \nonumber
    \end{align}
    where $\alpha_n^\pm = \mathbb{E}^{\widehat{\mathbb{Q}}^\pm_n} [\xi^\pm]$ and $\beta_n^\pm = \mathbb{E}^{\widehat{\mathbb{Q}}^\pm_n} [(\xi^\pm)^2]$, with $\xi^\pm \sim \widehat{\mathbb{Q}}^\pm_n $, and
    \begin{align}
        M(\epsilon^+, \epsilon^-) 
        &= 
        \exp \Bigg\{ 
            \frac{1}{\gamma} \mathbb{E}^{\widetilde{\mathbb{Q}}^+ \otimes \widetilde{\mathbb{Q}}^-} \Big[ 
                (A - 2 \eta C h^+(\epsilon^+)) \Delta \widetilde{N}^+ 
                - (B - 2 \eta C h^-(\epsilon^-)) \Delta \widetilde{N}^- \nonumber \\
        & \quad - \eta \Big( h^+(\epsilon^+) \Delta \widetilde{N}^+ 
                - h^-(\epsilon^-) \Delta \widetilde{N}^- \Big)^2 
            \Big] 
        \Bigg\} L(\epsilon^+, \epsilon^-) \nonumber \\ 
        L(\epsilon^+, \epsilon^-) 
        &= 
        \exp \Bigg\{ 
            \frac{1}{\gamma} \Big[ 
                (S + \epsilon^+) f^+(\epsilon^+) 
                - (S - \epsilon^-) f^-(\epsilon^-) 
                - \eta C^2 
            \Big] 
        \Bigg\}.
    \end{align}

    The objective function is concave if:
    \begin{align}
        (\beta_n^+ - (\alpha_n^+)^2) \, (\beta_n^- - (\alpha_n^-)^2) \geq \delta^2.
    \end{align}
\end{customtheorem}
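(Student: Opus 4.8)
The plan is to dispose of the two optimizations in turn --- first the entropic maximization over the policy $\boldsymbol{\pi}$, then the distributional infimum over $\widetilde{\mathbb{Q}}^\pm$ --- after observing that the integrand depends on the ambiguous distributions only through their first two moments. First I would expand $(C + h^+\Delta\widetilde{N}^+ - h^-\Delta\widetilde{N}^-)^2$ and take $\mathbb{E}^{\widetilde{\mathbb{Q}}^+\otimes\widetilde{\mathbb{Q}}^-}$; writing $\alpha^\pm=\mathbb{E}[\Delta\widetilde{N}^\pm]$ and $\beta^\pm=\mathbb{E}[(\Delta\widetilde{N}^\pm)^2]$, the quadratic cross term factorizes as $\alpha^+\alpha^-$ by independence, and the inner expected payoff collapses to exactly the exponent defining $M$ (call it $\gamma^{-1}\psi(\boldsymbol{\epsilon};\alpha^+,\alpha^-,\beta^+,\beta^-)$ plus the deterministic piece $\gamma\log L$). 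Thus $\int\boldsymbol{\pi}\,\mathbb{E}[\cdots]$ depends on $(\widetilde{\mathbb{Q}}^+,\widetilde{\mathbb{Q}}^-)$ only through the four scalars $(\alpha^\pm,\beta^\pm)$. For fixed moments the bracketed objective is linear in $\boldsymbol{\pi}$ plus the strictly concave entropy under $\int\boldsymbol{\pi}=1$, so by the Gibbs variational principle (a Lagrange multiplier for the mass constraint, or a pointwise first-variation argument) the unique maximizer is $\boldsymbol{\pi}\propto\exp(G/\gamma)=M/\!\int M$, with optimal value $\gamma\log\int M$ since $M=\exp(G/\gamma)$.

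To conclude that the optimal policy is the Gibbs density built from the \emph{worst-case} moments, I would exchange $\max_{\boldsymbol{\pi}}$ and $\inf_{\widetilde{\mathbb{Q}}}$: as $\gamma\log(\cdot)$ is increasing, $\inf_{\widetilde{\mathbb{Q}}}\gamma\log\int M=\gamma\log\inf_{\widetilde{\mathbb{Q}}}\int M$, and the minimizing moments are exactly those solving $\sup(-\gamma\int M)$, equivalently $\inf\int M$. Weak duality gives $\max_{\boldsymbol{\pi}}\inf_{\widetilde{\mathbb{Q}}}\le\inf_{\widetilde{\mathbb{Q}}}\max_{\boldsymbol{\pi}}$, and equality (a genuine saddle point $(\boldsymbol{\pi}^*,\widetilde{\mathbb{Q}}^*)$) follows from Sion's theorem once the reduced min-domain is convex and compact and the reduced objective is convex in $(\alpha^+,\alpha^-)$. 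This exchange is the \textbf{main obstacle}: the map $(\widetilde{\mathbb{Q}}^+,\widetilde{\mathbb{Q}}^-)\mapsto\widetilde{\mathbb{Q}}^+\otimes\widetilde{\mathbb{Q}}^-$ is bilinear, so $J$ need not be jointly convex in the distributional variable --- the offending cross term $\alpha^+\alpha^-$ is precisely the one whose Hessian contribution is indefinite --- and only after the moment reduction, and only under the concavity hypothesis, does the reduced min-problem become convex over a compact box, which is what legitimizes the swap and guarantees that the displayed $\boldsymbol{\pi}^*$ solves the original robust problem rather than merely the swapped one. I would also confirm integrability and weak compactness of the balls so that $\int M<\infty$ and the extremal moments are attained.

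Next I would carry out the Wasserstein-to-moment reduction, characterizing the image of $\{W_2(\widetilde{\mathbb{Q}}^\pm,\widehat{\mathbb{Q}}_n^\pm)\le\cdot\}$ in $(\alpha^\pm,\beta^\pm)$-space. The key fact is the sharp one-dimensional bound $W_2^2(\widetilde{\mathbb{Q}},\widehat{\mathbb{Q}}_n)\ge(\alpha-\alpha_n)^2+(\sigma-\sigma_n)^2$ with $\sigma^2=\beta-\alpha^2$, together with its attainability by shift-and-scale couplings; in the paper's normalization this yields $|\alpha^\pm-\alpha_n^\pm|\le\sqrt{\delta}$ and $\sigma^\pm\le\sigma_n^\pm+\sqrt{\delta-(\alpha^\pm-\alpha_n^\pm)^2}$. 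Because $\psi$ carries $\beta^\pm$ with coefficient $-\tfrac{\eta}{\gamma}(h^\pm)^2\le0$, the integrand $M$ is pointwise decreasing in $\beta^\pm$, so the worst case (minimizing $\int M$) pushes $\beta^\pm$ to its upper boundary; this turns the inequality into the stated equality $\beta^\pm=(\sqrt{\beta_n^\pm-(\alpha_n^\pm)^2}+\sqrt{\delta-(\alpha^\pm-\alpha_n^\pm)^2})^2+(\alpha^\pm)^2$ and leaves $(\alpha^+,\alpha^-)$ ranging over a box.

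Finally, for concavity I would substitute the $\beta$-constraint and note that the only nonlinearities of $\psi$ in $(\alpha^+,\alpha^-)$ are the convex terms $-2\eta(h^\pm)^2\sigma_n^\pm\sqrt{\delta-(\alpha^\pm-\alpha_n^\pm)^2}$ and the indefinite bilinear term $2\eta h^+h^-\alpha^+\alpha^-$. The $2\times2$ Hessian then has diagonal entries $D^\pm=2\eta(h^\pm)^2\sigma_n^\pm\delta\,[\delta-(\alpha^\pm-\alpha_n^\pm)^2]^{-3/2}>0$ and off-diagonal $2\eta h^+h^-$, and positive semidefiniteness of $\psi$ (hence convexity of $\int M=\int e^{\psi/\gamma}L$ and concavity of $-\gamma\int M$) reduces to $D^+D^-\ge(2\eta h^+h^-)^2$. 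The factors $(h^\pm)^2$ cancel, and the worst case over the box occurs at $\alpha^\pm=\alpha_n^\pm$, where the requirement collapses to $\sigma_n^+\sigma_n^-\ge\delta$, i.e.\ $(\beta_n^+-(\alpha_n^+)^2)(\beta_n^--(\alpha_n^-)^2)\ge\delta^2$, exactly the stated condition. I expect the moment-reduction algebra and this Hessian computation to be routine; the genuine difficulty, as noted, is justifying the min--max interchange in the presence of the bilinear cross term.
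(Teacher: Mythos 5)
Your proposal is correct and reproduces the paper's overall architecture --- Gibbs-form solution of the entropic inner maximization, reduction of the ambiguity to the four moments $(\alpha^\pm,\beta^\pm)$, pinning $\beta^\pm$ at its upper bound by monotonicity, and the same Hessian computation (diagonal $2\eta(h^\pm)^2\sigma_n^\pm\delta[\delta-(\alpha^\pm-\alpha_n^\pm)^2]^{-3/2}$, off-diagonal $2\eta h^+h^-$, worst case at $\alpha^\pm=\alpha_n^\pm$) leading to the condition $\sigma_n^+\sigma_n^-\ge\delta$. However, you justify the two key lemmas by genuinely different means. First, where the paper derives the admissible moment ranges (Propositions \ref{prop:range_alpha} and \ref{prop:range_beta}) through Blanchet--Murthy-style Lagrangian duality computations (and a citation to \cite{blanchet2022distributionally}), you invoke the sharp one-dimensional Gelbrich bound $W_2^2\ge(\alpha-\alpha_n)^2+(\sigma-\sigma_n)^2$ with attainability by shift-and-scale couplings; this is more elementary, makes the feasible region in $(\alpha,\sigma)$-space geometrically transparent, and recovers exactly the paper's $u(\alpha^\pm)$, while the paper's duality route is the one that generalizes when explicit transport maps are unavailable. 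Second, and more importantly, where the paper disposes of the $\max$--$\inf$ interchange in one line by citing the ``perfect duality'' of \cite{mai2021robust} --- without verifying that result's hypotheses apply to the bilinear product-measure structure $\widetilde{\mathbb{Q}}^+\otimes\widetilde{\mathbb{Q}}^-$, which is precisely where the indefinite cross term $\alpha^+\alpha^-$ lives --- you correctly identify this as the main obstacle and repair it with Sion's theorem, using the fact that under the concavity hypothesis the exponent is pointwise convex in $(\alpha^+,\alpha^-)$ for every $\boldsymbol{\epsilon}$, so its integral against any density $\boldsymbol{\pi}\ge 0$ is convex on the compact box. The trade-off is that your interchange (and hence the optimality of the displayed $\boldsymbol{\pi}^*$ for the original robust problem, not just the swapped one) is established only under $(\beta_n^+-(\alpha_n^+)^2)(\beta_n^--(\alpha_n^-)^2)\ge\delta^2$, whereas the paper asserts the policy formula unconditionally; your version is the more rigorous reading, since the paper's unconditional claim rests on an unverified citation. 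One small point to tighten: after Sion gives equality of values, you should spell out the standard saddle-point argument (attainment of both outer optima plus uniqueness of the entropic maximizer) to conclude that the Gibbs density at the worst-case moments is itself the max--inf optimizer.
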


For a detailed proof, please see Appendix~\ref{appendix: Proof of Main Theorem}.

\section{Choosing the Optimal Radius $\delta$}

Selecting the optimal $\delta$ presents a compelling challenge in our investigation. We employ the principles underlying the Wasserstein robust profile, which offers a nuanced framework for handling uncertainties. This methodology aligns with our goal to enhance the model's resilience against variabilities, thereby illuminating the path toward optimal $\delta$ selection.

We suppose $\boldsymbol{u} = (u^+, u^-)$ is distributed according to the product distribution $\widetilde{\mathbb{Q}}^+ \otimes \widetilde{\mathbb{Q}}^-$ on $\mathbb{R}^2$, and similarly, let $\boldsymbol{\alpha} = (\alpha^+, \alpha^-)$ and $\boldsymbol{\beta} = (\beta^+, \beta^-)$. In the derivation of the optimal policy, it was discovered that the policy is determined solely by the first and second moments of $\xi^{\pm}$. We define the \emph{Wasserstein robust profile} as follows:
\begin{gather}
    \mathcal{R}(\boldsymbol{\alpha}, \boldsymbol{\Sigma}) = \inf \left\{ W_2^2\big(\widetilde{\mathbb{Q}}^+ \otimes \widetilde{\mathbb{Q}}^-, \widehat{\mathbb{Q}}_n^+ \otimes \widehat{\mathbb{Q}}_n^-\big) \mid \mathbb{E}^{\widetilde{\mathbb{Q}}^+ \otimes \widetilde{\mathbb{Q}}^-}[\boldsymbol{u}] = \boldsymbol{\alpha}, \, \mathbb{E}^{\widetilde{\mathbb{Q}}^+ \otimes \widetilde{\mathbb{Q}}^-}[\boldsymbol{u}\boldsymbol{u}^\mathsf{T}] = \boldsymbol{\Sigma} \right\} \\
    \boldsymbol{\Sigma} = \begin{pmatrix}
        \beta^+ & \alpha^+ \alpha^- \\
        \alpha^+ \alpha^- & \beta^-
    \end{pmatrix}
\end{gather}

Additionally, we define the confidence region as follows:
\begin{align} 
\Lambda_\delta =  \biggl\{ \boldsymbol{\pi}( \widetilde{\mathbb{Q}}^\pm) : \widetilde{\mathbb{Q}}^\pm \in \mathcal{U}_{n, \delta}^\pm \biggr\}
\end{align}
where $\boldsymbol{\pi}( \widetilde{\mathbb{Q}}^\pm)$ is the optimal policy for the following optimization problem 
\begin{align}
    &\max_{\boldsymbol{\pi}} \int_{(\mathbb{R}^+)^2} \boldsymbol{\pi}(\epsilon^+, \epsilon^-) \, \mathbb{E}^{\widetilde{\mathbb{Q}}^+ \otimes \widetilde{\mathbb{Q}}^-} \Bigg[ A \Delta \widetilde{N}^+ - B \Delta \widetilde{N}^- \nonumber \\
    & - \eta \left( C + h^{+}(\epsilon^+) \Delta \widetilde{N}^+ - h^{-}(\epsilon^-) \Delta \widetilde{N}^- \right)^2 + (S + \epsilon^+) f^+(\epsilon^+) - (S - \epsilon^-) f^-(\epsilon^-) \Bigg] \, d\epsilon^+ d\epsilon^- \nonumber \\
    & - \gamma \int_{(\mathbb{R}^+)^2} \boldsymbol{\pi}(\epsilon^+, \epsilon^-) \log \boldsymbol{\pi}(\epsilon^+, \epsilon^-) \, d\epsilon^+ d\epsilon^- \label{eqn:main_opt}
\end{align}
As one may notice, this is similar to the robust optimization problem in formula \ref{orig_problem_2}, but with an inner infimum bracket.

It naturally follows that the following relation for the true distribution $\mathbb{Q}_0^\pm$ holds:
\begin{align}
    \boldsymbol{\pi}^* = \boldsymbol{\pi}(\mathbb{Q}_0^\pm) \in \Lambda_\delta \implies \mathcal{R}\left(\boldsymbol{\alpha}^*, \boldsymbol{\Sigma}^*\right) \leq 2\delta^2
\end{align}

where 
\begin{align}
    \boldsymbol{\alpha}^* &= \mathbb{E}^{\mathbb{Q}_0^+ \otimes \mathbb{Q}_0^-}[\boldsymbol{u}], \\
    \boldsymbol{\Sigma}^* &= \mathbb{E}^{\mathbb{Q}_0^+ \otimes \mathbb{Q}_0^-}[\boldsymbol{u} \boldsymbol{u}^\top].
\end{align}

This implication arises because, when $\boldsymbol{\pi}^* \in \Lambda_\delta$, there exists $\widetilde{\mathbb{Q}}^\pm \in \mathcal{U}_\delta(\mathbb{Q}_n^\pm)$ such that $\mathbb{E}^{\widetilde{\mathbb{Q}}^+ \otimes \widetilde{\mathbb{Q}}^-}[\boldsymbol{u}] = \boldsymbol{\alpha}^*$, and $\mathbb{E}^{\widetilde{\mathbb{Q}}^+ \otimes \widetilde{\mathbb{Q}}^-}[\boldsymbol{u}\boldsymbol{u}^\mathsf{T}] = \boldsymbol{\Sigma}^*$. By the definition of $\mathcal{R}(\boldsymbol{\alpha}^*, \boldsymbol{\Sigma}^*)$, we know:
\begin{align}
    \mathcal{R}(\boldsymbol{\alpha}^*, \boldsymbol{\Sigma}^*) \leq   W_2^2(\widetilde{\mathbb{Q}}^+ \otimes \widetilde{\mathbb{Q}}^-, \widehat{\mathbb{Q}}_n^+ \otimes \widehat{\mathbb{Q}}_n^-) = 2\delta^2
\end{align}

To define a confidence region corresponding to the $1 - \chi$ quantile, we select $\delta$ such that:
\begin{align}
    \widehat{\delta}_{1 - \chi} &= \min \left\{ \delta \mid \mathbb{P}(\boldsymbol{\pi}^* \in \Lambda_\delta \mid  \mathbb{Q}_0^\pm) \geq 1 - \chi  \right\} \\
    & \geq \min \left\{ \delta \mid \mathbb{P}(\mathcal{R}(\boldsymbol{\alpha}^*, \boldsymbol{\Sigma}^*) \leq 2\delta^2 \mid  \mathbb{Q}_0^\pm) \geq 1 - \chi  \right\} \\
    & = \min \left\{ \delta \mid \mathbb{P}(\mathcal{R}(\boldsymbol{\alpha}^*, \boldsymbol{\Sigma}^*) > 2\delta^2 \mid  \mathbb{Q}_0^\pm) < \chi  \right\} = \delta_{1 - \chi}
\end{align}

The conditional probability expressions in the equations above describe uncertainty in a statistical sense. Specifically, when we write:
\[
    \mathbb{P}(\boldsymbol{\pi}^* \in \Lambda_\delta \mid \mathbb{Q}_0^\pm)
\]
it means that, given that the true underlying probability distribution is \( \mathbb{Q}_0^\pm \), the probability that \( \boldsymbol{\pi}^* \in \Lambda_\delta \) occurs is at least \( 1 - \chi \). In simpler terms, this quantifies how likely it is that our estimated confidence region \( \Lambda_\delta \) contains the optimal solution \( \boldsymbol{\pi}^* \), assuming that \( \mathbb{Q}_0^\pm \) is indeed the true distribution governing the data.

It is important to note that this is not standard conditional probability notation. While this may not be a conventional way to express conditional probability, we hope the reader finds this notation helpful in conveying the intended meaning.

The theorem presented below delineates the distribution of $\mathcal{R}(\boldsymbol{\alpha}^*, \boldsymbol{\Sigma}^*)$. Based on this theorem, we can determine an optimal radius $\delta$ that enables the formation of a confidence region corresponding to the $1 - \chi$ quantile.

\begin{customtheorem}\label{theorem 2}
    Let \( n \geq 1 \) be given, and let \( \chi \in (0,1) \). Then, provided \( \delta > \frac{c(\chi)}{n} \), it holds that \( \boldsymbol{\pi}^* \in \Lambda_\delta \) with probability at least \( 1 - \chi \) with respect to the draws of the sample data \( \xi_1^\pm, \dots, \xi_n^\pm \). Here, \( c(\chi) \) is the \( \chi \)-quantile of a distribution in $\mathbb{R}$ (described in Appendix).
\end{customtheorem}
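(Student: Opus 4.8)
The plan is to convert the coverage statement $\{\boldsymbol{\pi}^* \in \Lambda_\delta\}$ into a purely moment-based event and then read off its probability from the law of the Wasserstein robust profile. Since the optimal policy depends on the underlying distribution only through the first two moments $(\boldsymbol{\alpha},\boldsymbol{\Sigma})$ — as is visible from the closed form in Theorem~\ref{main_thm}, where $\boldsymbol{\pi}^*$ enters only through $\alpha^{*,\pm},\beta^{*,\pm}$ — we have $\boldsymbol{\pi}^* = \boldsymbol{\pi}(\mathbb{Q}_0^\pm) \in \Lambda_\delta$ exactly when there is a product measure $\widetilde{\mathbb{Q}}^+\otimes\widetilde{\mathbb{Q}}^-$ with $\widetilde{\mathbb{Q}}^\pm \in \mathcal{U}^\pm_{n,\delta}$ whose moments equal the true moments $\boldsymbol{\alpha}^*,\boldsymbol{\Sigma}^*$. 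By the definition of $\mathcal{R}$, the existence of such a measure is governed by $\mathcal{R}(\boldsymbol{\alpha}^*,\boldsymbol{\Sigma}^*)$, and the forward implication $\boldsymbol{\pi}^*\in\Lambda_\delta \Rightarrow \mathcal{R}(\boldsymbol{\alpha}^*,\boldsymbol{\Sigma}^*)\le 2\delta^2$ is already recorded in the excerpt. I would complete this step by making the correspondence quantitative, so that $\mathbb{P}(\boldsymbol{\pi}^*\in\Lambda_\delta)$ is expressed through the sampling law of $\mathcal{R}(\boldsymbol{\alpha}^*,\boldsymbol{\Sigma}^*)$.

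Next I would evaluate $\mathcal{R}(\boldsymbol{\alpha}^*,\boldsymbol{\Sigma}^*)$ in closed form. Because the reference measure is a product and the off-diagonal entry $\alpha^+\alpha^-$ of $\boldsymbol{\Sigma}$ is automatically the product of the prescribed means, the infimum separates across the $+$ and $-$ marginals. For each scalar marginal, the $W_2^2$ cost of moving the empirical law to \emph{any} law with mean $\alpha^{*,\pm}$ and variance $(\sigma^{*,\pm})^2$ is bounded below by $(\alpha^{*,\pm}-\alpha_n^\pm)^2 + (\sigma^{*,\pm}-\widehat{\sigma}_n^\pm)^2$, with equality attained by an affine pushforward of $\widehat{\mathbb{Q}}_n^\pm$ (here $\widehat{\sigma}_n^\pm = \sqrt{\beta_n^\pm - (\alpha_n^\pm)^2}$ and $\sigma^{*,\pm}=\sqrt{\beta^{*,\pm}-(\alpha^{*,\pm})^2}$). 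This is precisely the structure already present in the constraint of Theorem~\ref{main_thm}. Hence $\mathcal{R}(\boldsymbol{\alpha}^*,\boldsymbol{\Sigma}^*) = \sum_{\pm}\big[(\alpha^{*,\pm}-\alpha_n^\pm)^2 + (\sigma^{*,\pm}-\widehat{\sigma}_n^\pm)^2\big]$, a simple function of the empirical mean and standard deviation.

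With this formula the coverage event reduces to a deviation statement for the empirical moments, and I would determine the law of $n\,\mathcal{R}(\boldsymbol{\alpha}^*,\boldsymbol{\Sigma}^*)$. Applying the central limit theorem together with the delta method to the pair $(\alpha_n^\pm,\widehat{\sigma}_n^\pm)$ — using independence of the $+$ and $-$ samples and finiteness of the fourth moments of $\mathbb{Q}_0^\pm$ — the vector $\sqrt{n}\,(\alpha_n^\pm-\alpha^{*,\pm},\ \widehat{\sigma}_n^\pm-\sigma^{*,\pm})$ converges to a centered bivariate Gaussian, so that $n\,\mathcal{R}$ converges to a quadratic form in Gaussians, i.e. a generalized $\chi^2$ law (a weighted sum of independent $\chi^2_1$ obtained by diagonalizing the limiting covariance). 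I would then define $c(\chi)$ as the upper $\chi$-quantile of this limiting law — the distribution ``described in the Appendix.'' Inverting the resulting tail bound $\mathbb{P}(\mathcal{R}>2\delta^2)\le\chi$ gives the threshold on $\delta$: once $\delta$ exceeds the stated level, the probability that the empirical moments lie too far from the truth is at most $\chi$, and therefore $\mathbb{P}(\boldsymbol{\pi}^*\in\Lambda_\delta)\ge 1-\chi$.

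The main obstacle is this distributional step. Two points need care: the delta method for the sample standard deviation is valid only away from $\sigma^{*,\pm}=0$ and requires control of fourth moments, and the covariance between $\alpha_n^\pm$ and $\widehat{\sigma}_n^\pm$ (governed by the third central moment) makes the limit a genuine \emph{weighted} sum of chi-squares rather than a plain $\chi^2_2$, so $c(\chi)$ has no elementary closed form and must be defined through the diagonalized covariance. A secondary subtlety is the mismatch between the product-ball constraint $W_2(\widetilde{\mathbb{Q}}^\pm,\widehat{\mathbb{Q}}_n^\pm)\le\delta$ imposed separately on each sign and the single aggregate bound $\mathcal{R}\le 2\delta^2$: writing $r^\pm$ for the per-sign cost computed above, the event $\{\boldsymbol{\pi}^*\in\Lambda_\delta\}$ is exactly $\{\max_\pm r^\pm\le\delta^2\}$, whereas $\mathcal{R}=r^++r^-$, so passing from one to the other is a relaxation. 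I would handle this either by a union bound over the two signs or by the independence factorization $\mathbb{P}(r^+\le\delta^2)\,\mathbb{P}(r^-\le\delta^2)$, adjusting the quantile that defines $c(\chi)$ accordingly.
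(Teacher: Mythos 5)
Your proposal is correct and shares the paper's overall skeleton --- reduce the coverage event to a moment-matching condition, evaluate the Wasserstein robust profile at the true moments, then invoke the CLT to define $c(\chi)$ as a quantile of the limiting law of $n\mathcal{R}$ --- but it executes the key computation by a genuinely different and cleaner method. The paper evaluates $\mathcal{R}(\boldsymbol{\alpha}^*,\boldsymbol{\Sigma}^*)$ via Lagrangian duality in the style of Blanchet et al.: a supremum over $(\boldsymbol{\Lambda},\boldsymbol{\lambda})$, explicit solution of the inner supremum, and a lengthy matrix calculation ending in a closed form built from $\boldsymbol{\alpha}^*-\boldsymbol{\alpha}_n$, $\boldsymbol{\Sigma}_n-\boldsymbol{\Sigma}^*$ and $\boldsymbol{\Sigma}_n^{-1}$; the CLT step and the identification of the limiting distribution are then only asserted, not carried out. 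You instead work in the primal: $W_2^2$ tensorizes over product measures, the off-diagonal constraint in $\boldsymbol{\Sigma}$ is automatically satisfied by any product law with the prescribed means, and in one dimension the cheapest $W_2^2$ transport of $\widehat{\mathbb{Q}}_n^\pm$ onto a law with prescribed mean and standard deviation costs exactly $(\alpha^{*,\pm}-\alpha_n^\pm)^2+(\sigma^{*,\pm}-\widehat{\sigma}_n^\pm)^2$, attained by an affine pushforward. This yields the closed form in two lines, makes both the separation across the $+$ and $-$ marginals and the $O_P(1/n)$ scaling transparent, and avoids the paper's matrix algebra (which contains questionable steps, e.g.\ an unexplained factor of $n$ inside the dual expectation). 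Your treatment is also more careful on a point the paper elides: the paper's main text only records the necessary implication $\boldsymbol{\pi}^*\in\Lambda_\delta\Rightarrow\mathcal{R}(\boldsymbol{\alpha}^*,\boldsymbol{\Sigma}^*)\le 2\delta^2$, whereas Theorem 2 requires a \emph{sufficient} condition for coverage; your per-sign events $\{r^\pm\le\delta^2\}$, combined by a union bound or by independence of the $+$ and $-$ samples, supply exactly that missing direction. One caveat: reading the balls literally as $\{W_2\le\delta\}$, your argument gives coverage once $\delta^2\gtrsim c(\chi)/n$, i.e.\ $\delta=O(n^{-1/2})$; the theorem's stated rate $\delta>c(\chi)/n$ is only recovered under the squared-radius convention that the paper's own Propositions \ref{prop:range_alpha} and \ref{prop:range_beta} actually use (there the mean ranges over $\alpha_n^\pm\pm\sqrt{\delta}$, so $\delta$ plays the role of a $W_2^2$ budget). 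That discrepancy is internal to the paper rather than a flaw in your argument, but you should state explicitly which convention you adopt.
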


For a detailed proof, please refer to Appendix \ref{proof of theorem 2}. The distribution involved is complex and not the primary focus of this work; it is derived from the central limit theorem. The main objective of this theorem is to demonstrate that the convergence rate of the \( 1 - \chi \) confidence radius is \( \mathcal{O}\left( \frac{1}{n} \right) \).

\newpage


\clearpage
\begin{appendices}

\section{Proof of Theorem \ref{main_thm}}
\label{appendix: Proof of Main Theorem}

The proof of Theorem \ref{main_thm} proceeds in several steps, leveraging the principles of duality and optimization to simplify the problem and establish the optimal policy. Below, we outline the key steps.

\subsection{Step 1: Reformulation Using Duality}
We begin by applying the perfect duality principle, as discussed in \cite{mai2021robust}, to interchange the \(\min\) and \(\max\) operations in the original problem. This reformulation reduces the optimization to:

\begin{align}
    \inf_{\widetilde{\mathbb{Q}}^+ \in \mathcal{U}^+_{n, \delta}, \, \widetilde{\mathbb{Q}}^- \in \mathcal{U}^-_{n, \delta}} \quad & \max_{\boldsymbol{\pi}} \, \int_{(\mathbb{R}^+)^2} \boldsymbol{\pi}(\epsilon^+, \epsilon^-) \, \mathbb{E}^{\widetilde{\mathbb{Q}}^+ \otimes \widetilde{\mathbb{Q}}^-} \Bigg[ A \Delta \widetilde{N}^+ - B \Delta \widetilde{N}^- \nonumber \\
    & \quad - \eta \left( C + h^{+}(\epsilon^+) \Delta \widetilde{N}^+ - h^{-}(\epsilon^-) \Delta \widetilde{N}^- \right)^2 \Bigg] \, d\epsilon^+ d\epsilon^- \nonumber \\
    & \quad - \gamma \int_{(\mathbb{R}^+)^2} \boldsymbol{\pi}(\epsilon^+, \epsilon^-) \log \boldsymbol{\pi}(\epsilon^+, \epsilon^-) \, d\epsilon^+ d\epsilon^-
\end{align}

The optimal policy for the innermost maximization problem can be obtained by applying verification theorem. Consequently, the optimal policy under measure \(\widetilde{\mathbb{Q}}^\pm\) is delineated as follows: 

\begin{align}
    \boldsymbol{\pi}(\epsilon^+, \epsilon^-) 
    &\sim 
    \frac{M(\epsilon^+, \epsilon^-)}{\int_{(\mathbb{R}^+)^2} M(\epsilon^+, \epsilon^-) \, d\epsilon^+ \, d\epsilon^-} \nonumber \\ 
    M(\epsilon^+, \epsilon^-) 
    &= 
    \exp \Bigg\{ 
        \frac{1}{\gamma} \mathbb{E}^{\widetilde{\mathbb{Q}}^+ \otimes \widetilde{\mathbb{Q}}^-} \Big[ 
            (A - 2 \eta C h^+(\epsilon^+)) \Delta \widetilde{N}^+ 
            - (B - 2 \eta C h^-(\epsilon^-)) \Delta \widetilde{N}^- \nonumber \\
    & \quad - \eta \Big( h^+(\epsilon^+) \Delta \widetilde{N}^+ 
            - h^-(\epsilon^-) \Delta \widetilde{N}^- \Big)^2 
        \Big] 
    \Bigg\} L(\epsilon^+, \epsilon^-) \nonumber \\ 
    L(\epsilon^+, \epsilon^-) 
    &= 
    \exp \Bigg\{ 
        \frac{1}{\gamma} \Big[ 
            (S + \epsilon^+) f^+(\epsilon^+) 
            - (S - \epsilon^-) f^-(\epsilon^-) 
            - \eta C^2 
        \Big] 
    \Bigg\}
\end{align}

Upon eliminating terms independent of \((\Delta \widetilde{N}^+, \Delta \widetilde{N}^-)\) and incorporating the optimal policy, the optimization problem we address is formulated as follows:
\begin{align}
     \inf_{\widetilde{\mathbb{Q}}^+ \in \mathcal{U}^+_{n, \delta}, \, \widetilde{\mathbb{Q}}^- \in \mathcal{U}^-_{n, \delta}}  \gamma \int_{(\mathbb{R}^+)^2} M(\epsilon^+, \epsilon^-) \, d\epsilon^+ \, d\epsilon^-
\end{align}
which is equivalent to 
\begin{align}
     \sup_{\widetilde{\mathbb{Q}}^+ \in \mathcal{U}^+_{n, \delta}, \, \widetilde{\mathbb{Q}}^- \in \mathcal{U}^-_{n, \delta}}  -\gamma \int_{(\mathbb{R}^+)^2} M(\epsilon^+, \epsilon^-) \, d\epsilon^+ \, d\epsilon^-
\end{align}
our subsequent objective is to define the constraint domain for the optimization problem. Since \(M(\epsilon^+, \epsilon^-)\) depends on the first and second moments of \(\Delta \widetilde{N}^\pm\), we simplify the notation as follows:
\[
\alpha^\pm = \mathbb{E}^{\widetilde{\mathbb{Q}}^\pm} [\Delta \widetilde{N}^\pm], \quad \beta^\pm = \mathbb{E}^{\widetilde{\mathbb{Q}}^\pm} [(\Delta \widetilde{N}^\pm)^2], \quad \text{where } \widetilde{\mathbb{Q}}^\pm \in \mathcal{U}_{n, \delta}^\pm.
\]
Under the empirical distribution \(\mathbb{\widehat{Q}}_n^\pm\), these quantities are defined as:
\[
\alpha_n^\pm = \mathbb{E}^{\mathbb{\widehat{Q}}_n^\pm} [\Delta \widetilde{N}^\pm], \quad \beta_n^\pm = \mathbb{E}^{\mathbb{\widehat{Q}}_n^\pm} [(\Delta \widetilde{N}^\pm)^2].
\]

Next, we will introduce two propositions before the proof of theorem \ref{main_thm}.

\subsection{Range of \(\alpha^\pm\)}

\begin{proposition}
\label{prop:range_alpha}
The range of \(\alpha^\pm\), where \(\widetilde{\mathbb{Q}}^\pm \in \mathcal{U}_{n, \delta}^\pm\), is given by the interval:
\[
\alpha^\pm \in \left[\alpha_n^\pm - \sqrt{\delta}, \, \alpha_n^\pm + \sqrt{\delta} \right].
\]
\end{proposition}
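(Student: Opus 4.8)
The plan is to establish the claimed interval as a set equality by proving two containments: every achievable mean $\alpha^\pm$ lies in $[\alpha_n^\pm - \sqrt{\delta},\, \alpha_n^\pm + \sqrt{\delta}]$, and conversely every point of this interval is attained by some admissible $\widetilde{\mathbb{Q}}^\pm \in \mathcal{U}_{n,\delta}^\pm$. Since the argument is identical for the $+$ and $-$ cases, I would fix one sign and suppress it in the notation. The only analytic input required is the elementary fact that the Wasserstein-2 distance dominates the gap between the means.

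For the forward containment, fix any $\widetilde{\mathbb{Q}} \in \mathcal{U}_{n,\delta}$ and let $\pi$ be an arbitrary coupling of $\widetilde{\mathbb{Q}}$ and $\widehat{\mathbb{Q}}_n$, with coordinates $(\Delta\widetilde{N}, \xi)$. By the triangle inequality for integrals followed by Cauchy--Schwarz,
\[
\big|\alpha - \alpha_n\big| = \Big|\mathbb{E}_\pi\big[\Delta\widetilde{N} - \xi\big]\Big| \le \mathbb{E}_\pi\big[\,|\Delta\widetilde{N} - \xi|\,\big] \le \Big(\mathbb{E}_\pi\big[(\Delta\widetilde{N} - \xi)^2\big]\Big)^{1/2}.
\]
Since this holds for every coupling, passing to the infimum of the right-hand side gives $|\alpha - \alpha_n| \le W_2(\widetilde{\mathbb{Q}}, \widehat{\mathbb{Q}}_n)$, hence $(\alpha - \alpha_n)^2 \le W_2^2(\widetilde{\mathbb{Q}}, \widehat{\mathbb{Q}}_n) \le \delta$, i.e. $\alpha \in [\alpha_n - \sqrt{\delta},\, \alpha_n + \sqrt{\delta}]$. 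This shows the range is contained in the interval.

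For the reverse containment (attainability), fix a target $\alpha$ with $|\alpha - \alpha_n| \le \sqrt{\delta}$ and set $t = \alpha - \alpha_n$. Define $\widetilde{\mathbb{Q}}$ as the pushforward of $\widehat{\mathbb{Q}}_n$ under the translation $x \mapsto x + t$; concretely $\widetilde{\mathbb{Q}} = \tfrac{1}{n}\sum_{i=1}^n \delta_{\xi_i + t}$. Its mean is $\alpha_n + t = \alpha$, as desired. Using the translation coupling that matches each $\xi_i$ with $\xi_i + t$ as an admissible transport plan bounds the squared distance by $W_2^2(\widetilde{\mathbb{Q}}, \widehat{\mathbb{Q}}_n) \le \tfrac{1}{n}\sum_{i=1}^n t^2 = t^2 \le \delta$. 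Hence $\widetilde{\mathbb{Q}} \in \mathcal{U}_{n,\delta}$, and because the construction works for every $\alpha$ in the closed interval, including the endpoints $\alpha_n \pm \sqrt{\delta}$, the interval is contained in the range. Combining the two containments yields the claim.

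The argument is essentially routine, and the only point demanding care is a consistent treatment of the ball radius: the forward bound uses the sharp inequality $|\alpha - \alpha_n| \le W_2$, while the translation construction shows that this bound is attained (the monotone coupling of an empirical measure with its translate is in fact optimal, so $W_2^2 = t^2$), which is what makes the endpoints $\alpha_n \pm \sqrt{\delta}$ reachable rather than merely limiting. I would therefore state explicitly that it is the \emph{squared} Wasserstein distance that is constrained by $\delta$, so the $W_2$-radius is $\sqrt{\delta}$; flagging this avoids confusion with the symbol $\delta$ appearing in the definition of $\mathcal{U}_{n,\delta}^\pm$ and explains the $\sqrt{\delta}$ in the statement.
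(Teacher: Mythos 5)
Your proof is correct, but it takes a genuinely different route from the paper. The paper establishes the two endpoints by strong Lagrangian duality for Wasserstein DRO: it writes $\sup_{\widetilde{\mathbb{Q}} \in \mathcal{U}_{n,\delta}} \mathbb{E}^{\widetilde{\mathbb{Q}}}[\xi]$ as $\inf_{\lambda} \{ \mathbb{E}^{\widehat{\mathbb{Q}}_n}[\sup_z \{ z - \lambda \| z - \xi \|^2 \}] + \lambda \delta \}$, solves the inner supremum in closed form to get $\mathbb{E}^{\widehat{\mathbb{Q}}_n}[\xi] + \tfrac{1}{4\lambda} + \lambda \delta$, and optimizes over $\lambda$ to obtain $\alpha_n + \sqrt{\delta}$ (and symmetrically $\alpha_n - \sqrt{\delta}$ for the infimum). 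You instead argue on the primal side: Cauchy--Schwarz applied to an arbitrary coupling gives $|\alpha - \alpha_n| \le W_2 \le \sqrt{\delta}$, and the translation pushforward of the empirical measure attains every value in the interval. Your argument is more elementary and in one respect more complete: it proves the set equality directly, including attainability of every interior point, whereas the paper's computation only identifies the supremum and infimum of achievable means (the full interval then follows from convexity of the Wasserstein ball and linearity of the mean, a step the paper leaves implicit). What the duality route buys in exchange is machinery: the same dual template is reused immediately in the paper's Proposition~\ref{prop:range_beta} for the range of the second moment $\beta^\pm$ subject to a mean constraint, where an explicit primal construction like your translation would be substantially harder to produce. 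Finally, your closing remark about the radius is well taken and consistent with the paper's implicit convention: although the uncertainty set is written as $W_2(\widetilde{\mathbb{Q}}, \widehat{\mathbb{Q}}_n) \le \delta$, the dual functional $\lambda \delta$ used in the paper's proof corresponds to the constraint $W_2^2 \le \delta$, and it is only under that reading that the stated endpoints $\alpha_n^\pm \pm \sqrt{\delta}$ are correct, exactly as you flag.
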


\begin{proof}
First, compute the upper bound of the expectation
\begin{align}
    \sup_{\widetilde{\mathbb{Q}}^\pm \in \mathcal{U}_\delta(\widehat{\mathbb{Q}}^\pm_n)} \mathbb{E}^{\widetilde{\mathbb{Q}}^\pm}[\xi^\pm] &= \inf_{\lambda} \Big\{ \mathbb E^{\widehat{\mathbb Q}_n^\pm}\big[  \sup_{z} \{  z - \lambda \| z - \xi^\pm   \|^2   \}  \big]  + \lambda \delta \Big\} \nonumber \\
    & = \inf_{\lambda} \Big\{  \mathbb E^{\widehat{\mathbb Q}_n^\pm}\big[  \sup_{\Delta} \{  \Delta + \xi^\pm - \lambda \| \Delta  \|^2   \}  \big]  + \lambda \delta    \Big\} \nonumber \\
    & =  \inf_{\lambda} \Big\{  \mathbb E^{\widehat{\mathbb Q}_n^\pm}\big[ \xi^\pm  \big] + \frac{1}{4\lambda}  + \lambda \delta   \Big\} \nonumber \\
    & =  \mathbb E^{\widehat{\mathbb Q}_n^\pm}\big[ \xi^\pm  \big] + \sqrt{\delta} 
\end{align}
As for the lower bound of the expectation
\begin{align}
    \inf_{\widetilde{\mathbb{Q}}^\pm \in \mathcal{U}_\delta(\widehat{\mathbb{Q}}^\pm_n)} \mathbb{E}^{\widetilde{\mathbb{Q}}^\pm}[\xi^\pm] &= -\sup_{\widetilde{\mathbb{Q}}^\pm \in \mathcal{U}_\delta(\widehat{\mathbb{Q}}^\pm_n)} \mathbb{E}^{\widetilde{\mathbb{Q}}^\pm}[-\xi^\pm] \nonumber \\
    & = - \inf_{\lambda} \Big\{ \mathbb E^{\widehat{\mathbb Q}_n^\pm}\big[  \sup_{z} \{  -z - \lambda \| z - \xi^\pm  \|^2   \}  \big]  + \lambda \delta \Big\} \nonumber \\
    & = -\inf_{\lambda} \Big\{   \mathbb E^{\widehat{\mathbb Q}_n^\pm}\big[  \sup_{\Delta} \{  -\Delta - \xi^\pm - \lambda \| \Delta  \|^2   \}  \big]  + \lambda \delta    \Big\} \nonumber \\ 
    & = - \inf_{\lambda} \Big\{  \mathbb E^{\widehat{\mathbb Q}_n^\pm}\big[ -\xi^\pm  \big] + \frac{1}{4\lambda}  + \lambda \delta   \Big\} \nonumber \\
    & =   \mathbb E^{\widehat{\mathbb Q}_n^\pm}\big[\xi^\pm  \big] - \sqrt{\delta}
\end{align}
\end{proof}

\subsection{Range of $\beta^\pm$}

With this foundation established, we now turn our attention to delineating the range for \(\mathbb{E}^{\widetilde{\mathbb{Q}}^\pm}[(\Delta \widetilde{N}^\pm)^2]\), given that \(\alpha^\pm\) is confined within the interval \(\left[\alpha_n^\pm - \sqrt{\delta}, \alpha_n^\pm + \sqrt{\delta} \right]\). This forms the basis of our second claim.

\begin{proposition}
\label{prop:range_beta}
For a given expectation \(\alpha^\pm \in \left[\alpha_n^\pm - \sqrt{\delta}, \alpha_n^\pm + \sqrt{\delta} \right]\), the second moment satisfies \(\beta^\pm \in [\ell(\alpha^\pm), u(\alpha^\pm)]\), where:
\begin{align}
    u(\alpha^\pm) &= \beta_n^\pm + 2(\alpha^\pm - \alpha_n^\pm) \alpha_n^\pm + \delta + 2 \sqrt{\text{Var}^{\widehat{\mathbb{Q}}^\pm_n}(\Delta \widetilde{N}^\pm)} \sqrt{\delta - (\alpha^\pm - \alpha_n^\pm)^2}, \nonumber \\
    &= \beta_n^\pm + 2(\alpha^\pm - \alpha_n^\pm) \alpha_n^\pm + \delta + 2 \sqrt{\beta_n^\pm - (\alpha_n^\pm)^2} \sqrt{\delta - (\alpha^\pm - \alpha_n^\pm)^2}, \\
    \ell(\alpha^\pm) &= 2(\alpha^\pm - \alpha_n^\pm) \alpha_n^\pm + \delta - 2 \sqrt{\text{Var}^{\widehat{\mathbb{Q}}^\pm_n}(\Delta \widetilde{N}^\pm)} \sqrt{\delta - (\alpha^\pm - \alpha_n^\pm)^2}, \nonumber \\
    &= 2(\alpha^\pm - \alpha_n^\pm) \alpha_n^\pm + \delta - 2 \sqrt{\beta_n^\pm - (\alpha_n^\pm)^2} \sqrt{\delta - (\alpha^\pm - \alpha_n^\pm)^2}.
\end{align}
\end{proposition}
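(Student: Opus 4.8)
The plan is to reduce the infinite-dimensional problem of ranging $\beta^\pm$ over the Wasserstein ball $\mathcal{U}_{n,\delta}^\pm$ (subject to the fixed mean $\alpha^\pm$ coming from Proposition~\ref{prop:range_alpha}) to a finite-dimensional moment optimization, exploiting that the center $\widehat{\mathbb{Q}}_n^\pm$ is atomic. Since $W_2^2(\widetilde{\mathbb{Q}}^\pm, \widehat{\mathbb{Q}}_n^\pm) = \min_\pi \int |x-y|^2 \, d\pi$ over couplings $\pi$, and since $\alpha^\pm,\beta^\pm$ depend only on $\widetilde{\mathbb{Q}}^\pm$ and not on the coupling, the achievable set of pairs $(\alpha^\pm,\beta^\pm)$ is unchanged if I optimize jointly over a target measure together with a coupling of cost at most $\delta$. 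Writing the transport out of the atom $\xi_i^\pm$ (of mass $1/n$) as a conditional probability measure $\mu_i$, I would introduce its mean $a_i = \mathbb{E}^{\mu_i}[y]$ and variance $v_i = \mathrm{Var}^{\mu_i}(y) \ge 0$. Then $\alpha^\pm = \frac1n\sum_i a_i$, $\beta^\pm = \frac1n\sum_i (a_i^2 + v_i)$, and the transport cost equals $\frac1n\sum_i \big[(a_i - \xi_i^\pm)^2 + v_i\big]$, so the entire problem becomes an optimization over the scalars $(a_i, v_i)_{i=1}^n$.

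Next I would eliminate the variance budget. Writing $V = \frac1n\sum_i v_i \ge 0$ and $T = \frac1n\sum_i (a_i - \xi_i^\pm)^2$, the cost constraint reads $T + V \le \delta$ while $\beta^\pm = \frac1n\sum_i a_i^2 + V$. As $\beta^\pm$ is increasing in $V$, the upper bound uses all slack, $V = \delta - T$, reducing it to maximizing the linear functional $\frac1n\sum_i \xi_i^\pm a_i$ subject to $\frac1n\sum_i a_i = \alpha^\pm$ and $\frac1n\sum_i (a_i - \xi_i^\pm)^2 \le \delta$; the lower bound uses $V = 0$, reducing it to minimizing $\frac1n\sum_i a_i^2$ over the same feasible set. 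In both cases I would substitute the displacements $d_i = a_i - \xi_i^\pm$, split each $d_i$ into its empirical mean $\alpha^\pm - \alpha_n^\pm$ plus a centered part, and apply the Cauchy–Schwarz inequality in the $\frac1n$-weighted inner product against the centered data $\xi_i^\pm - \alpha_n^\pm$, whose weighted energy is exactly $\beta_n^\pm - (\alpha_n^\pm)^2$. Pushing Cauchy–Schwarz to its positive extreme yields $u(\alpha^\pm)$ and to its negative extreme yields $\ell(\alpha^\pm)$; the factor $\sqrt{\delta - (\alpha^\pm - \alpha_n^\pm)^2}$ emerges as the square root of the displacement budget remaining after the mean shift $\alpha^\pm - \alpha_n^\pm$ has been paid for.

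Two points need care, and I expect the lower bound to be the main obstacle. First, tightness (achievability): for the upper bound I must exhibit conditionals $\mu_i$ realizing the extremal $(a_i, v_i)$ — e.g. two-point measures carrying the prescribed mean and variance — and verify they assemble into a genuine probability measure in the ball, whereas for the lower bound the extremizers are the deterministic (Dirac) transports $a_i = \xi_i^\pm + d_i$. Second, the minimization of $\frac1n\sum_i a_i^2$ is a quadratic whose unconstrained optimum is the point mass at $\alpha^\pm$ (giving $\beta^\pm = (\alpha^\pm)^2$); the stated $\ell(\alpha^\pm)$ is instead the constrained optimum attained on the sphere $\frac1n\sum_i d_i^2 = \delta$, which is the governing regime precisely when the empirical variance dominates the residual ball radius, $\beta_n^\pm - (\alpha_n^\pm)^2 \ge \delta - (\alpha^\pm - \alpha_n^\pm)^2$ — consistent with the concavity hypothesis $(\beta_n^+ - (\alpha_n^+)^2)(\beta_n^- - (\alpha_n^-)^2) \ge \delta^2$ of Theorem~\ref{main_thm}. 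Confirming that this boundary case, rather than the interior point-mass solution, determines the lower endpoint is the crux, and I would settle it by a one-dimensional parabola analysis in the centered displacement-energy variable.
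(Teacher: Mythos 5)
Your route is genuinely different from the paper's. The paper obtains the upper bound by citing Propositions A.2--A.3 of Blanchet--Chen--Zhou, and the lower bound by Lagrangian duality: it dualizes the mean constraint, solves the inner supremum in closed form, and minimizes over $\kappa = 1+\lambda_1 \ge 1$. You instead work on the primal side, disintegrating the coupling into conditionals with means $a_i$ and variances $v_i$ and reducing everything to weighted Cauchy--Schwarz; this is self-contained, makes achievability explicit (the paper never exhibits extremal measures), and your upper-bound computation does reproduce $u(\alpha^\pm)$ exactly. Your regime warning on the lower bound is also correct, and it applies equally to the paper: the paper's last step replaces $\inf_{\kappa\ge 1}$ by the unconstrained stationary value, which is legitimate only when $\kappa^* = \sqrt{\mathrm{Var}^{\widehat{\mathbb{Q}}_n^\pm}(\xi^\pm)/\big(\delta - (\alpha^\pm-\alpha_n^\pm)^2\big)} \ge 1$, i.e.\ exactly your condition $\beta_n^\pm - (\alpha_n^\pm)^2 \ge \delta - (\alpha^\pm - \alpha_n^\pm)^2$; outside that regime the true minimum is $(\alpha^\pm)^2$, attained by a point mass, so the proposition needs this as a hypothesis. (Also note that Theorem~\ref{main_thm}'s condition $(\beta_n^+-(\alpha_n^+)^2)(\beta_n^--(\alpha_n^-)^2)\ge\delta^2$ does not actually imply it.)

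Be aware, however, that when you run your parabola analysis it will \emph{not} confirm the stated $\ell$: in the governing regime it yields
\begin{align*}
\ell(\alpha^\pm) \;=\; \beta_n^\pm + 2(\alpha^\pm-\alpha_n^\pm)\,\alpha_n^\pm + \delta - 2\sqrt{\beta_n^\pm-(\alpha_n^\pm)^2}\,\sqrt{\delta-(\alpha^\pm-\alpha_n^\pm)^2},
\end{align*}
which exceeds the proposition's $\ell(\alpha^\pm)$ by $\beta_n^\pm$. Concretely, writing $a_i = \xi_i^\pm + \bar d + e_i$ with $\bar d = \alpha^\pm - \alpha_n^\pm$, $\frac{1}{n}\sum_i e_i = 0$ and $s^2 = \frac{1}{n}\sum_i e_i^2 \le \delta - \bar d^2$, one gets $\frac{1}{n}\sum_i a_i^2 = \beta_n^\pm + 2\bar d\,\alpha_n^\pm + \bar d^2 + s^2 + \frac{2}{n}\sum_i(\xi_i^\pm-\alpha_n^\pm)e_i \ge \beta_n^\pm + 2\bar d\,\alpha_n^\pm + \bar d^2 + s^2 - 2s\sqrt{\beta_n^\pm-(\alpha_n^\pm)^2}$, and minimizing the parabola over $s$ gives the display (or $(\alpha^\pm)^2$ in the other regime). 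The stated formula fails elementary sanity checks: as $\delta \to 0$ the ball collapses to $\{\widehat{\mathbb{Q}}_n^\pm\}$ and $\ell$ must tend to $\beta_n^\pm$, yet the stated expression tends to $0$, and it can even be negative, which is impossible for a second moment. So the missing $\beta_n^\pm$ is an error in the proposition and in the paper's proof, not in your plan; the paper's slip occurs where the constant collected from $\frac{1}{n}\sum_i\big[2(\alpha+\widehat{\xi}_i-\alpha_n)(\alpha_n-\widehat{\xi}_i)+\widehat{\xi}_i^{\,2}\big]$, which equals $2\alpha_n^2-\beta_n$, is recorded as $2\alpha_n^2$. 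Carry your argument to the end and report the corrected bound rather than forcing it to match the statement.
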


\begin{proof}
 For the upper bound, 
    \begin{align}
        u(\alpha) = \max_{\widetilde{\mathbb{Q}}^\pm \in \mathcal{U}_\delta(\widehat{\mathbb Q}^\pm_n), \  \mathbb{E}^{\widetilde{\mathbb{Q}}^\pm}[\xi^\pm] = \alpha } \mathbb{E}^{\widetilde{\mathbb{Q}}^\pm}[(\xi^\pm)^2]
    \end{align}
    This derivation represents a special case of propositions \(A.2\) and \(A.3\) in \cite{blanchet2022distributionally}, which can be readily deduced. \\ 
    As for the lower-bound
    \begin{align}
        \ell(\alpha) &= - \max_{\widetilde{\mathbb{Q}}^\pm \in \mathcal{U}_\delta(\widehat{\mathbb Q}^\pm_n), \  \mathbb{E}^{\widetilde{\mathbb{Q}}^\pm}[\xi^\pm] = \alpha } \mathbb{E}^{\widetilde{\mathbb{Q}}^\pm}[-(\xi^\pm)^2] \nonumber \\
        & = -\inf_{\lambda_1 \geq 0, \lambda_2} \Big\{ \frac{1}{n} \sum_{i = 1}^{n} \sup_u \big[  -u^2 - \lambda_1 (u - \xi^\pm_i)^2 - \lambda_2 u  \big] + \lambda_1 \delta + \lambda_2 \alpha   \Big\} 
    \end{align}
    Here, 
    \begin{align}
         & \sup_u \big[  -u^2 - \lambda_1 (u - \xi^\pm_i)^2 - \lambda_2 u  \big] \nonumber \\
    = &\sup_u \big[  -(1 + \lambda_1) u^2 + (2\lambda_1 \xi^\pm_i - \lambda_2) u - \lambda_1 (\xi^\pm_i)^2 \big] \nonumber \\
    = &\frac{ (2\lambda_1 \xi^\pm_i - \lambda_2)^2  }{4(1 + \lambda_1)} - \lambda_1 (\xi^\pm_i)^2
    \end{align}
    Then, the lower bound becomes
    \begin{align}
        h(\alpha) =  \inf_{\lambda_1 \geq 0, \lambda_2} \Big\{ \frac{1}{n} \sum_{i = 1}^{n} \bigg[  \frac{ (2\lambda_1 \xi^\pm_i - \lambda_2)^2  }{4(1 + \lambda_1)} - \lambda_1 (\xi^\pm_i)^2 \bigg] + \lambda_1 \delta + \lambda_2 \alpha   \Big\} 
    \end{align}
    Taking the partial derivatives w.r.t $\lambda_2$, we get
    \begin{align}
        \lambda_2 &= -2(1 + \lambda_1) \alpha + 2\lambda_1 \mathbb E^{\widehat{\mathbb Q}^\pm_n} [\xi^\pm]  
    \end{align}
    
    Let $\kappa = 1 + \lambda_1$, and plugging in $\lambda_2$, we get
    \begin{align}
        &\inf_{\kappa \geq 1} \Big\{  \frac{1}{n} \sum_{i = 1}^{n} \Big[  
      \frac{ (\kappa \alpha + (\kappa - 1) \widehat{\xi}^\pm_i - (\kappa - 1) \mathbb E^{\widehat{\mathbb Q}^\pm_n} [\xi^\pm]    )^2 }{ \kappa} - (\kappa - 1) (\widehat{\xi}^\pm_i)^2  \Big] \nonumber \\
      +& (\kappa - 1) \delta - 2\kappa \alpha^2 + 2 (\kappa - 1) \alpha \mathbb E^{\widehat{\mathbb Q}^\pm_n} [\xi^\pm]   \Big\} \nonumber \\
      =& \inf_{\kappa \geq 1} \Big \{  \frac{1}{n} \sum_{i = 1}^{n} \Big[  (\alpha - \mathbb E^{\widehat{\mathbb Q}^\pm_n} [\xi^\pm])(\alpha + 2\widehat{\xi}_i - \mathbb E^{\widehat{\mathbb Q}^\pm_n} [\xi^\pm] ) \kappa + \frac{(\mathbb E^{\widehat{\mathbb Q}^\pm_n} [\xi^\pm]  - \widehat{\xi}_i)^2}{\kappa}  \Big]   + \kappa \delta - 2\alpha^2 \kappa + 2 \alpha \kappa \mathbb E^{\widehat{\mathbb Q}^\pm_n} [\xi^\pm]     \Big \} \nonumber \\
      +& \frac{1}{n} \sum_{i = 1}^{n} \big[ 2 (\alpha + \widehat{\xi}_i - \mathbb E^{\widehat{\mathbb Q}^\pm_n} [\xi^\pm]  )(\mathbb E^{\widehat{\mathbb Q}^\pm_n} [\xi^\pm]  - \widehat{\xi}_i) + (\widehat{\xi}_i)^2 \big] - \delta - 2 \alpha \mathbb E^{\widehat{\mathbb Q}^\pm_n} [\xi^\pm]    \nonumber \\
      =& \inf_{\kappa \geq 1} \Big\{  \big( \delta - (\alpha - \mathbb E^{\widehat{\mathbb Q}^\pm_n} [\xi^\pm])^2   \big) \kappa + \frac{\text{Var}^{\widehat{\mathbb Q}^\pm_n} (\xi)}{\kappa}    \Big\} + 2(\mathbb E^{\widehat{\mathbb Q}^\pm_n} [\xi^\pm])^2  - \delta - 2 \alpha \mathbb E^{\widehat{\mathbb Q}^\pm_n} [\xi^\pm] \nonumber \\
      =& \min \Big\{  \text{Var}^{\widehat{\mathbb Q}^\pm_n}(\xi^\pm)  + (\mathbb E^{\widehat{\mathbb Q}^\pm_n} [\xi^\pm])^2  - \alpha^2, 2(\mathbb E^{\widehat{\mathbb Q}^\pm_n} [\xi^\pm])^2  - \delta - 2 \alpha \mathbb E^{\widehat{\mathbb Q}^\pm_n} [\xi^\pm] \nonumber \\
      +& 2 \sqrt{\text{Var}^{\widehat{\mathbb Q}^\pm_n} (\xi^\pm)} \sqrt{ \delta - (\alpha - \mathbb E^{\widehat{\mathbb Q}^\pm_n} [\xi^\pm])^2 } \   \Big\}
    \end{align}
    As we notice that, in the above $\min$ bracket, 
    \begin{align}
         &\delta + 2 \alpha \mathbb E^{\widehat{\mathbb Q}^\pm_n} [\xi^\pm] - 2 \sqrt{\text{Var}^{\widehat{\mathbb Q}^\pm_n} (\xi)} \sqrt{ \delta - (\alpha - \mathbb E^{\widehat{\mathbb Q}^\pm_n} [\xi^\pm])^2  } - 2(\mathbb E^{\widehat{\mathbb Q}^\pm_n} [\xi^\pm])^2 - (\alpha^2 -   \text{Var}^{\widehat{\mathbb Q}^\pm_n}(\xi^\pm)  - (\mathbb E^{\widehat{\mathbb Q}^\pm_n} [\xi^\pm])^2 ) \nonumber \\
          = & \ \delta -  2 \sqrt{\text{Var}^{\widehat{\mathbb Q}^\pm_n} (\xi)} \sqrt{ \delta - (\alpha - \mathbb E^{\widehat{\mathbb Q}^\pm_n} [\xi^\pm])^2  } + \text{Var}^{\widehat{\mathbb Q}^\pm_n}(\xi) - (\alpha - \mathbb E^{\widehat{\mathbb Q}^\pm_n} [\xi^\pm])^2 \nonumber \\
          = & \ \Big( \sqrt{\delta - (\alpha - \mathbb E^{\widehat{\mathbb Q}^\pm_n} [\xi^\pm])^2 } - \sqrt{\text{Var}^{\widehat{\mathbb Q}^\pm_n} (\xi)}  \Big)^2 \geq 0
    \end{align}
    Therefore, the lower bound is
    \begin{align}
        \ell(\alpha) = 2(\alpha - \mathbb E^{\widehat{\mathbb Q}^\pm_n}[\xi^\pm]) \mathbb E^{\widehat{\mathbb Q}^\pm_n}[\xi^\pm] + \delta  - 2 \sqrt{\text{Var}^{\widehat{\mathbb Q}^\pm_n} (\xi)} \sqrt{ \delta - (\alpha - \mathbb E^{\widehat{\mathbb Q}^\pm_n} [\xi^\pm])^2  } 
    \end{align}
\end{proof}

\subsection{Proof of Theorem \ref{main_thm}}
With the proposition \ref{prop:range_alpha}, and proposition \ref{prop:range_beta}, we are now ready for the proof of the theorem \ref{main_thm}.

\begin{proof}
After plugging the $\beta^\pm$, the expression inside the exponential function becomes 
    \begin{align}
        &(A - 2 \widetilde{\eta} C h^+) \alpha^+ - (B - 2 \widetilde{\eta} C h^-) \alpha^- + 2\widetilde{\eta} h^+ h^- \alpha^+ \alpha^- - \widetilde{\eta} (h^+)^2 \beta^+ - \widetilde{\eta} (h^-)^2 \beta^- \nonumber \\
        = & (A - 2 \widetilde{\eta} C h^+) \alpha^+ - (B - 2 \widetilde{\eta} C h^-) \alpha^- - \widetilde{\eta}(h^+\alpha^+ - h^- \alpha^-)^2 \nonumber \\
        - & \widetilde{\eta} (h^+)^2 \Big( \sqrt{\text{Var}^{\widehat{\mathbb Q}^+_n} (\xi^+)} +  \sqrt{\delta - (\alpha^+ - \mathbb E^{\widehat{\mathbb Q}^+_n}[\xi^+])^2}  \Big)^2 - \widetilde{\eta} (h^-)^2 \Big( \sqrt{\text{Var}^{\widehat{\mathbb Q}^-_n} (\xi^-)} +  \sqrt{\delta - (\alpha^- - \mathbb E^{\widehat{\mathbb Q}^-_n}[\xi^\pm])^2}  \Big)^2 \nonumber 
    \end{align}
    The following function $(h^+\alpha^+ - h^- \alpha^-)^2$ has Hessian matrix 
    \begin{align}
        \begin{pmatrix}
             2(h^+)^2 & -2h^+h^- \\
             -2h^+h^- & 2(h^-)^2
        \end{pmatrix}
    \end{align}
    the Hessian matrix has determinant $0$, then by Sylvester's criterion, it is a positive semi-definite matrix, which makes $(h^+\alpha^+ - h^- \alpha^-)^2$ a convex function. 
    now, let's consider the last two terms, first define function $\phi(\alpha^\pm)$
    \begin{align}
        \phi(\alpha^\pm) = \Big( \sqrt{\text{Var}^{\widehat{\mathbb Q}^\pm_n} (\xi^\pm)} +  \sqrt{\delta - (\alpha^\pm - \mathbb E^{\widehat{\mathbb Q}^\pm_n}[\xi^\pm])^2}  \Big)^2
    \end{align}
    the first derivative of $\phi(\alpha^\pm)$ is 
    \begin{align}
        \frac{\partial \phi(\alpha^\pm)}{\partial \alpha^\pm} = 2 \Big( \sqrt{\text{Var}^{\widehat{\mathbb Q}^\pm_n} (\xi^\pm)} +  \sqrt{\delta - (\alpha^\pm - \mathbb E^{\widehat{\mathbb Q}^\pm_n}[\xi^\pm])^2}  \Big) \frac{- (\alpha^\pm - \mathbb E^{\widehat{\mathbb Q}^\pm_n}[\xi^\pm]) }{\sqrt{\delta - (\alpha^\pm - \mathbb E^{\widehat{\mathbb Q}^\pm_n}[\xi^\pm])^2} }
    \end{align}
    then the second derivative is 
    \begin{align}
        &\frac{\partial}{\partial \alpha^\pm} \  2 \Big( \sqrt{\text{Var}^{\widehat{\mathbb Q}^\pm_n} (\xi^\pm)} +  \sqrt{\delta - (\alpha^\pm - \mathbb E^{\widehat{\mathbb Q}^\pm_n}[\xi^\pm])^2}  \Big) \frac{- (\alpha^\pm - \mathbb E^{\widehat{\mathbb Q}^\pm_n}[\xi^\pm]) }{\sqrt{\delta - (\alpha^\pm - \mathbb E^{\widehat{\mathbb Q}^\pm_n}[\xi^\pm])^2} } \nonumber \\
        =&  2 \Big( \sqrt{\text{Var}^{\widehat{\mathbb Q}^\pm_n} (\xi)} +  \sqrt{\delta - (\alpha^\pm - \mathbb E^{\widehat{\mathbb Q}^\pm_n}[\xi^\pm])^2}  \Big) \frac{ - \sqrt{\delta - (\alpha^\pm - \mathbb E^{\widehat{\mathbb Q}^\pm_n}[\xi^\pm])^2} -   \frac{(\alpha^\pm - \mathbb E^{\widehat{\mathbb Q}^\pm_n}[\xi^\pm])^2 }{\sqrt{\delta - (\alpha^\pm - \mathbb E^{\widehat{\mathbb Q}^\pm_n}[\xi^\pm])^2} } }{\delta - (\alpha^\pm - \mathbb E^{\widehat{\mathbb Q}^\pm_n}[\xi^\pm])^2} \nonumber \\
        +& \frac{2 (\alpha^\pm - \mathbb E^{\widehat{\mathbb Q}^\pm_n}[\xi^\pm])^2 }{\delta - (\alpha^\pm - \mathbb E^{\widehat{\mathbb Q}^\pm_n}[\xi^\pm])^2} \nonumber \\
    =& \frac{2 (\alpha^\pm - \mathbb E^{\widehat{\mathbb Q}^\pm_n}[\xi^\pm])^2 }{\delta - (\alpha^\pm - \mathbb E^{\widehat{\mathbb Q}^\pm_n}[\xi^\pm])^2} -  \frac{ 2\delta \Big( \sqrt{\text{Var}^{\widehat{\mathbb Q}^\pm_n} (\xi)} +  \sqrt{\delta - (\alpha^\pm - \mathbb E^{\widehat{\mathbb Q}^\pm_n}[\xi^\pm])^2}  \Big) }{\Big(\delta - (\alpha^\pm - \mathbb E^{\widehat{\mathbb Q}^\pm_n}[\xi^\pm])^2\Big)^{\frac{3}{2}}} \nonumber \\
    =& \frac{2\Big( \big[ ( \alpha^\pm - \mathbb E^{\widehat{\mathbb Q}^\pm_n}[\xi^\pm]  )^2 - \delta  \big] \sqrt{\delta - (\alpha^\pm - \mathbb E^{\widehat{\mathbb Q}^\pm_n}[\xi^\pm])^2} - \delta  \sqrt{\text{Var}^{\widehat{\mathbb Q}^\pm_n} (\xi^\pm)}  \Big)}{\Big(\delta - (\alpha^\pm - \mathbb E^{\widehat{\mathbb Q}^\pm_n}[\xi^\pm])^2\Big)^{\frac{3}{2}}} \nonumber \\
    =& -2 - \frac{2 \delta \sqrt{\text{Var}^{\widehat{\mathbb Q}^\pm_n} (\xi^\pm)}  \Big)}{\Big(\delta - (\alpha^\pm - \mathbb E^{\widehat{\mathbb Q}^\pm_n}[\xi^\pm])^2\Big)^{\frac{3}{2}}}
    \end{align}
    notice that $ \mathbb{E}^{\widehat{\mathbb{Q}}^\pm_n}[\xi^\pm] -\sqrt{\delta} \leq \alpha^\pm \leq \mathbb{E}^{\widehat{\mathbb{Q}}^\pm_n}[\xi^\pm] + \sqrt{\delta}$, we know that the second derivative is negative.

    Therefore, consider the following function
    \begin{align}
        \widetilde{\eta}(h^+\alpha^+ - h^- \alpha^-)^2 + \widetilde{\eta}(h^+)^2 \phi(\alpha^+) + \widetilde{\eta}(h^-)^2 \phi(\alpha^-)
    \end{align}
    the Hessian matrix of the above function is 
    \begin{align}
        &\begin{pmatrix}
            2\widetilde{\eta}(h^+)^2 - 2 \widetilde{\eta} (h^+)^2 -  \frac{2 \delta \widetilde{\eta} (h^+)^2 \sqrt{\text{Var}^{\widehat{\mathbb Q}^+_n} (\xi^+)}  \Big)}{\Big(\delta - (\alpha^+ - \mathbb E^{\widehat{\mathbb Q}^+_n}[\xi^+])^2\Big)^{\frac{3}{2}}}& -2\widetilde{\eta}h^+h^- \\  -2\widetilde{\eta} h^+h^- 
             & 2\widetilde{\eta} (h^-)^2 - 2 \widetilde{\eta}(h^-)^2 -  \frac{2  \delta \widetilde{\eta} (h^-)^2 \sqrt{\text{Var}^{\widehat{\mathbb Q}^-_n} (\xi^-)}  \Big)}{\Big(\delta - (\alpha^- - \mathbb E^{\widehat{\mathbb Q}^-_n}[\xi^-])^2\Big)^{\frac{3}{2}}}
        \end{pmatrix} \nonumber \\
        =& \begin{pmatrix}
            -  \frac{2 \delta \widetilde{\eta} (h^+)^2 \sqrt{\text{Var}^{\widehat{\mathbb Q}^+_n} (\xi^+)}  \Big)}{\Big(\delta - (\alpha^+ - \mathbb E^{\widehat{\mathbb Q}^+_n}[\xi^+])^2\Big)^{\frac{3}{2}}} &  -2\widetilde{\eta}h^+h^- \\
             -2\widetilde{\eta}h^+h^- &  -\frac{2  \delta \widetilde{\eta} (h^-)^2 \sqrt{\text{Var}^{\widehat{\mathbb Q}^-_n} (\xi^-)}  \Big)}{\Big(\delta - (\alpha^- - \mathbb E^{\widehat{\mathbb Q}^-_n}[\xi^-])^2\Big)^{\frac{3}{2}}}
        \end{pmatrix}
    \end{align}

    Thus, the original equation has Hessian matrix 
    \begin{align}
        \begin{pmatrix}
             \frac{2 \delta \widetilde{\eta} (h^+)^2 \sqrt{\text{Var}^{\widehat{\mathbb Q}^+_n} (\xi^+)}  \Big)}{\Big(\delta - (\alpha^+ - \mathbb E^{\widehat{\mathbb Q}^+_n}[\xi^+])^2\Big)^{\frac{3}{2}}} &  2\widetilde{\eta}h^+h^- \\
             2\widetilde{\eta}h^+h^- &  \frac{2  \delta \widetilde{\eta} (h^-)^2 \sqrt{\text{Var}^{\widehat{\mathbb Q}^-_n} (\xi^-)}  \Big)}{\Big(\delta - (\alpha^- - \mathbb E^{\widehat{\mathbb Q}^-_n}[\xi^-])^2\Big)^{\frac{3}{2}}}
        \end{pmatrix}
    \end{align}

    The determinant of the above matrix is
    \begin{align}
        4\widetilde{\eta}^2 (h^+)^2(h^-)^2 \Bigg( \frac{\delta^2 \sqrt{\text{Var}^{\widehat{\mathbb Q}^+_n} (\xi^+)} \sqrt{\text{Var}^{\widehat{\mathbb Q}^-_n} (\xi^-)} }{\Big(\delta - (\alpha^+ - \mathbb E^{\widehat{\mathbb Q}^+_n}[\xi^+])^2\Big)^{\frac{3}{2}} \Big(\delta - (\alpha^- - \mathbb E^{\widehat{\mathbb Q}^-_n}[\xi^-])^2\Big)^{\frac{3}{2}}  } - 1   \Bigg)
    \end{align}

    When $\alpha^\pm = \mathbb E^{\widehat{\mathbb Q}^\pm_n}[\xi^\pm]$, the above determinant becomes the lowest, which is 
    \begin{align}
           4\widetilde{\eta}^2 (h^+)^2(h^-)^2 \Bigg( \frac{ \sqrt{\text{Var}^{\widehat{\mathbb Q}^+_n} (\xi^+)} \sqrt{\text{Var}^{\widehat{\mathbb Q}^-_n} (\xi^-)} }{\delta} - 1   \Bigg)
    \end{align}

    Thus, according to the assumption, the function inside the exponential bracket is convex.  More specifically, the following function of $\alpha^\pm$ is convex
    \begin{align}
        (A - 2 \widetilde{\eta} C h^+) \alpha^+ - (B - 2 \widetilde{\eta} C h^-) \alpha^- + 2\widetilde{\eta} h^+ h^- \alpha^+ \alpha^- - \widetilde{\eta} (h^+)^2 \beta^+ - \widetilde{\eta} (h^-)^2 \beta^-
    \end{align}
    is a convex function w.r.t $\alpha^\pm$

    To show the objective function is concave when $\text{Var}^{\widehat{\mathbb Q}^+_n} (\xi^+)\text{Var}^{\widehat{\mathbb Q}^-_n} (\xi^-) \geq \delta^2$,
    let's consider the following, assume $p(\boldsymbol{x})$ is a convex function, then $e^{p(\boldsymbol{x})}$ is also a convex function 
    Since $p(\boldsymbol{x})$ is a convex function, then 
    \begin{align}
        p(\lambda \boldsymbol{x} + (1 - \lambda) \boldsymbol{y}) \leq \lambda p(\boldsymbol{x}) + (1 - \lambda) p(\boldsymbol{y})
    \end{align}
    because the exponential function is increasing and convex, then 
    \begin{align}
        e^{p(\lambda \boldsymbol{x} + (1 - \lambda) \boldsymbol{y} )} \leq e^{\lambda p(\boldsymbol{x}) + (1 - \lambda) p(\boldsymbol{y})} \leq  \lambda e^{p(\boldsymbol{x})} + (1 - \lambda) e^{p(\boldsymbol{y})}
    \end{align}
    Since integral doesn't change the convexity or concavity, then the objective function is concave 
\end{proof}

\section{Proof of Theorem \ref{theorem 2}} \label{proof of theorem 2}

\begin{proof}
\begin{align}
 \mathcal{R}(\boldsymbol{\alpha}, \boldsymbol{\Sigma}) = \inf \Big\{ W_2^2\big(\widetilde{\mathbb{Q}}^+ \otimes \widetilde{\mathbb{Q}}^-, \widehat{\mathbb{Q}}_n^+ \otimes \widehat{\mathbb{Q}}_n^-\big) \, \Big| \, \mathbb{E}^{\widetilde{\mathbb{Q}}^+ \otimes \widetilde{\mathbb{Q}}^-}[\boldsymbol{u}] = \boldsymbol{\alpha}, \, \mathbb{E}^{\widetilde{\mathbb{Q}}^+ \otimes \widetilde{\mathbb{Q}}^-}[\boldsymbol{u}\boldsymbol{u}^\mathsf{T}] = \boldsymbol{\Sigma} \Big\}
\end{align}

Thus, following the derivation in \cite{blanchet2022distributionally}, we know the duality of $\mathcal{R}(\boldsymbol{\alpha}, \boldsymbol{\Sigma})$ is 
\begin{align}
    \mathcal{R}(\boldsymbol{\alpha}, \boldsymbol{\Sigma}) =& \sup_{\boldsymbol \Lambda \in \mathbb R^{2 \times 2}, \  \boldsymbol \lambda \in \mathbb R^2} \bigg\{  - \mathbb E^{\widehat{\mathbb{Q}}_n^+ \otimes \widehat{\mathbb{Q}}_n^-} \Big[ \sup_{\boldsymbol{\phi} \in \mathbb R^2} \Big\{ \textbf{Tr} \big(\Lambda [\boldsymbol{\phi}\boldsymbol{\phi}^T - \Sigma ] \big) + \boldsymbol{\lambda}^T (\boldsymbol{\phi} - \boldsymbol{\alpha}) - \| \boldsymbol{\phi} - \boldsymbol{u} \|_2^2  \Big\}  \Big]  \bigg\}
\end{align}

Since 
\begin{align}
    &\sup_{\boldsymbol{\phi} \in \mathbb R^2} \Big\{ \textbf{Tr} \big(\Lambda [\boldsymbol{\phi}\boldsymbol{\phi}^T - \Sigma ] \big) + \boldsymbol{\lambda}^T (\boldsymbol{\phi} - \boldsymbol{\alpha}) - \| \boldsymbol{\phi} - \boldsymbol{u} \|_2^2  \Big\} \nonumber \\
    =& \sup_{\boldsymbol{\Delta} \in \mathbb R^2} \Big\{ \textbf{Tr} \big(\Lambda [(\boldsymbol{\Delta} + \boldsymbol{u}) (\boldsymbol{\Delta} + \boldsymbol{u})^T - \Sigma ] \big) + \boldsymbol{\lambda}^T (\boldsymbol{\Delta} + \boldsymbol{u} - \boldsymbol{\alpha})   - \| \boldsymbol{\Delta} \|_2^2  \Big\} \nonumber \\
    =&  \sup_{\boldsymbol{\Delta} \in \mathbb R^2} \Big\{ \textbf{Tr} \big(\Lambda [(\boldsymbol{\Delta} + \boldsymbol{u}) (\boldsymbol{\Delta} + \boldsymbol{u})^T - \boldsymbol{u}\boldsymbol{u}^T ] \big) + \boldsymbol{\lambda}^T \boldsymbol{\Delta}   - \| \boldsymbol{\Delta} \|_2^2  \Big\}  \nonumber \\
    +& \textbf{Tr}(\boldsymbol{\Lambda}\big[ \boldsymbol{u}\boldsymbol{u}^T - \Sigma \big]) + \boldsymbol{\lambda}^T (\boldsymbol{u} - \boldsymbol{\alpha}) \nonumber \\
\end{align}

For the last term, it can be written as
\begin{align}
    \textbf{Tr} \big(\boldsymbol \Lambda [(\boldsymbol{\Delta} + \boldsymbol{u}) (\boldsymbol{\Delta} + \boldsymbol{u})^T - \boldsymbol{u}\boldsymbol{u}^T ] \big) =& \int_0^1 \Big( 2\textbf{Tr} \big( \boldsymbol{\Lambda} \boldsymbol{u} \boldsymbol{\Delta}^T \big) + 2t \boldsymbol \Delta^T \boldsymbol{\Lambda} \boldsymbol{\Delta} \Big) \ dt \nonumber \\
    =& \Big( 2\textbf{Tr} \big( \boldsymbol{\Lambda} \boldsymbol{u} \boldsymbol{\Delta}^T \big) +  \boldsymbol \Delta^T \boldsymbol{\Lambda} \boldsymbol{\Delta} \Big)
\end{align}

By plugging $\boldsymbol{\alpha}^*$,$\boldsymbol{\Sigma}^*$ to the above formula, there is
    \begin{align}
    \mathcal{R}(\boldsymbol{\alpha}^*, \boldsymbol{\Sigma}^*) =& \sup_{\boldsymbol{\lambda} \in \mathbb R^2} \Bigg\{ -\mathbb E^{\widehat{\mathbb{Q}}_n^+ \otimes \widehat{\mathbb{Q}}_n^-} \big[ \boldsymbol{\lambda^T (\boldsymbol{u} - \boldsymbol{\alpha}^*)} \big]  \nonumber \\
    +& \sup_{\boldsymbol{\Lambda} \in \mathbb R^{2 \times 2}} \bigg\{  -\mathbb E^{\widehat{\mathbb{Q}}_n^+ \otimes \widehat{\mathbb{Q}}_n^-} \Big[  \sup_{\boldsymbol{\Delta} \in \mathbb R^2} \Big\{ 2\textbf{Tr} \big( \boldsymbol{\Lambda} \boldsymbol{u} \boldsymbol{\Delta}^T \big) + \boldsymbol \Delta^T \boldsymbol{\Lambda} \boldsymbol{\Delta} + \boldsymbol{\lambda}^T \boldsymbol{\Delta}   - \| \boldsymbol{\Delta} \|_2^2  \Big\} \Big] - \textbf{Tr}\big( \boldsymbol \Lambda (\boldsymbol{\Sigma}_n - \boldsymbol{\Sigma}^*) \big) \bigg\} \Bigg\} \nonumber  \\ 
    =&  \sup_{\boldsymbol{\lambda} \in \mathbb R^2} \Big\{ -\mathbb E^{\widehat{\mathbb{Q}}_n^+ \otimes \widehat{\mathbb{Q}}_n^-} \big[ \boldsymbol{\lambda^T (\boldsymbol{u} - \boldsymbol{\alpha}^*)} \big] \Big\} + \sup_{\boldsymbol{\Lambda} \in \mathbb R^{2 \times 2}} \bigg\{  -\mathbb E^{\widehat{\mathbb{Q}}_n^+ \otimes \widehat{\mathbb{Q}}_n^-} \Big[ n \| \boldsymbol{\Lambda} \boldsymbol{u} + \boldsymbol{\lambda}  \|_2^2  \Big] - \textbf{Tr}\big( \boldsymbol \Lambda (\boldsymbol{\Sigma}_n - \boldsymbol{\Sigma}^*) \big) \bigg\}  \nonumber \\
    =&  \sup_{\boldsymbol{\lambda} \in \mathbb R^2} \Big\{ -\mathbb E^{\widehat{\mathbb{Q}}_n^+ \otimes \widehat{\mathbb{Q}}_n^-} \big[ \boldsymbol{\lambda^T (\boldsymbol{u} - \boldsymbol{\alpha}^*)} \big] \Big\} - \inf_{\boldsymbol{\Lambda} \in \mathbb R^{2 \times 2}} \bigg\{  \mathbb E^{\widehat{\mathbb{Q}}_n^+ \otimes \widehat{\mathbb{Q}}_n^-} \Big[ n \| \boldsymbol{\Lambda} \boldsymbol{u} + \boldsymbol{\lambda}  \|_2^2  \Big] + \textbf{Tr}\big( \boldsymbol \Lambda (\boldsymbol{\Sigma}_n - \boldsymbol{\Sigma}^*) \big) \bigg\}  \nonumber 
\end{align}

Now, we take derivative with respect to $\boldsymbol{\Lambda}$
\begin{align}
     &\nabla_{\boldsymbol{\Lambda}} \mathbb E^{\widehat{\mathbb{Q}}_n^+ \otimes \widehat{\mathbb{Q}}_n^-} \Big[ n \| \boldsymbol{\Lambda} \boldsymbol{u} + \boldsymbol{\lambda}  \|_2^2  \Big] + \textbf{Tr}\big( \boldsymbol \Lambda (\boldsymbol{\Sigma}_n - \boldsymbol{\Sigma}^*) \big)  \nonumber \\ 
     =& 2 n E^{\widehat{\mathbb{Q}}_n^+ \otimes \widehat{\mathbb{Q}}_n^-} \Big[  ( \boldsymbol{\Lambda} \boldsymbol{u} + \boldsymbol{\lambda} ) \boldsymbol{u}^T \Big] + (\boldsymbol{\Sigma}_n - \boldsymbol{\Sigma}^*) \nonumber \\
     =& 2n\boldsymbol{\Lambda} \boldsymbol{\Sigma}_n + 2n\boldsymbol{\lambda} \boldsymbol{\alpha}^T_n + (\boldsymbol{\Sigma}_n - \boldsymbol{\Sigma}^*) = \boldsymbol{0}
\end{align}

which results to 
\begin{align}
    \boldsymbol{\Lambda}^* =& -\frac{1}{2n}  \big(  2n\boldsymbol{\lambda} \boldsymbol{\alpha}^T_n + (\boldsymbol{\Sigma}_n - \boldsymbol{\Sigma}^*) \big) \boldsymbol{\Sigma}_n^{-1} \nonumber \\
    =&  - \big[\boldsymbol{\lambda} \boldsymbol{\alpha}_n^T  + \frac{1}{2n} (\boldsymbol{\Sigma}_n - \boldsymbol{\Sigma}^*) \big]\boldsymbol{\Sigma}_n^{-1} \nonumber 
\end{align}

Plug the optimal $\boldsymbol{\Lambda}$, we obtain
\begin{align}
     &\mathbb E^{\widehat{\mathbb{Q}}_n^+ \otimes \widehat{\mathbb{Q}}_n^-} \Big[ \| \boldsymbol{\Lambda}^* \boldsymbol{u} + \boldsymbol{\lambda}  \|_2^2  \Big] \nonumber \\
     =& \boldsymbol{\lambda}^T \boldsymbol{\lambda} + 2\boldsymbol{\lambda}^T \boldsymbol{\Lambda}^* \boldsymbol{\alpha}_n  + \mathbb E^{\widehat{\mathbb{Q}}_n^+ \otimes \widehat{\mathbb{Q}}_n^-} \Big[ \boldsymbol{u}^T (\boldsymbol{\Lambda}^{*})^T \boldsymbol{\Lambda}^* \boldsymbol{u}  \Big] \nonumber
\end{align}

Since
\begin{align}
    (\boldsymbol{\Lambda}^{*})^T \boldsymbol{\Lambda}^* &= \boldsymbol{\Sigma}_n^{-1} \big(  \boldsymbol{\alpha}_n \boldsymbol{\lambda}^T  + \frac{1}{2n} (\boldsymbol{\Sigma}_n - \boldsymbol{\Sigma}^*) \big)  \big(  \boldsymbol{\lambda} \boldsymbol{\alpha}_n^T   + \frac{1}{2n} (\boldsymbol{\Sigma}_n - \boldsymbol{\Sigma}^*) \big) \boldsymbol{\Sigma}_n^{-1} \nonumber \\
    &= \boldsymbol{\lambda}^T \boldsymbol{\lambda} \big( \boldsymbol{\Sigma}_n^{-1} \boldsymbol{\alpha}_n \boldsymbol{\alpha}_n^T \boldsymbol{\Sigma}_n^{-1} \big)  + \frac{1}{n} \boldsymbol{\Sigma}_n^{-1} \boldsymbol{\alpha}_n \boldsymbol{\lambda}^T (\boldsymbol{\Sigma}_n - \boldsymbol{\Sigma}^*)\boldsymbol{\Sigma}_n^{-1} + \frac{1}{4n^2} \boldsymbol{\Sigma}_n^{-1}(\boldsymbol{\Sigma}_n - \boldsymbol{\Sigma}^*)\boldsymbol{\Sigma}_n^{-1}
\end{align}

Thus, 
\begin{align}
     &\boldsymbol{u}^T(\boldsymbol{\Lambda}^{*})^T \boldsymbol{\Lambda}^* \boldsymbol{u} \nonumber \\
     =& \boldsymbol{\lambda}^T \boldsymbol{\lambda} \big( \boldsymbol{u}^T \boldsymbol{\Sigma}_n^{-1} \boldsymbol{\alpha}_n \boldsymbol{\alpha}_n^T \boldsymbol{\Sigma}_n^{-1} \boldsymbol{u} \big) + \frac{1}{n} \boldsymbol{u}^T\boldsymbol{\Sigma}_n^{-1} \boldsymbol{\alpha}_n \boldsymbol{\lambda}^T (\boldsymbol{\Sigma}_n - \boldsymbol{\Sigma}^*)\boldsymbol{\Sigma}_n^{-1} \boldsymbol{u} + \frac{1}{4n^2} \boldsymbol{u}^T\boldsymbol{\Sigma}_n^{-1}(\boldsymbol{\Sigma}_n - \boldsymbol{\Sigma}^*)\boldsymbol{\Sigma}_n^{-1}\boldsymbol{u}   \nonumber \\
     =& \boldsymbol{\lambda}^T \boldsymbol{\lambda} \big(\boldsymbol{\alpha}_n^T \boldsymbol{\Sigma}_n^{-1} \boldsymbol{u} \boldsymbol{u}^T \boldsymbol{\Sigma}_n^{-1} \boldsymbol{\alpha}_n  \big)  + \frac{1}{n} \boldsymbol{\lambda}^T (\boldsymbol{\Sigma}_n - \boldsymbol{\Sigma}^*)\boldsymbol{\Sigma}_n^{-1} \boldsymbol{u} \boldsymbol{u}^T\boldsymbol{\Sigma}_n^{-1} \boldsymbol{\alpha}_n + \frac{1}{4n^2} \boldsymbol{u}^T\boldsymbol{\Sigma}_n^{-1}(\boldsymbol{\Sigma}_n - \boldsymbol{\Sigma}^*)\boldsymbol{\Sigma}_n^{-1}\boldsymbol{u} 
\end{align}

Then, taking the expectation, we have 
\begin{align}
    &\mathbb E^{\widehat{\mathbb{Q}}_n^+ \otimes \widehat{\mathbb{Q}}_n^-} \Big[ \boldsymbol{u}^T (\boldsymbol{\Lambda}^{*})^T \boldsymbol{\Lambda}^* \boldsymbol{u}  \Big] \nonumber \\
    =&  \boldsymbol{\lambda}^T \boldsymbol{\lambda} \big(\boldsymbol{\alpha}_n^T \boldsymbol{\Sigma}_n^{-1} \mathbb E^{\widehat{\mathbb{Q}}_n^+ \otimes \widehat{\mathbb{Q}}_n^-} \big[\boldsymbol{u} \boldsymbol{u}^T\big] \boldsymbol{\Sigma}_n^{-1} \boldsymbol{\alpha}_n  \big)  + \frac{1}{n} \boldsymbol{\lambda}^T (\boldsymbol{\Sigma}_n - \boldsymbol{\Sigma}^*)\boldsymbol{\Sigma}_n^{-1} \mathbb E^{\widehat{\mathbb{Q}}_n^+ \otimes \widehat{\mathbb{Q}}_n^-}\big[\boldsymbol{u} \boldsymbol{u}^T\big]\boldsymbol{\Sigma}_n^{-1} \boldsymbol{\alpha}_n \nonumber \\
    +& \frac{1}{4n^2} \mathbb E^{\widehat{\mathbb{Q}}_n^+ \otimes \widehat{\mathbb{Q}}_n^-}\big[ \boldsymbol{u}^T\boldsymbol{\Sigma}_n^{-1}(\boldsymbol{\Sigma}_n - \boldsymbol{\Sigma}^*)\boldsymbol{\Sigma}_n^{-1}\boldsymbol{u} \big] \nonumber \\
    =&  \boldsymbol{\lambda}^T \boldsymbol{\lambda} \big(\boldsymbol{\alpha}_n^T \boldsymbol{\Sigma}_n^{-1} \boldsymbol{\Sigma}_n \boldsymbol{\Sigma}_n^{-1} \boldsymbol{\alpha}_n  \big)  + \frac{1}{n} \boldsymbol{\lambda}^T (\boldsymbol{\Sigma}_n - \boldsymbol{\Sigma}^*)\boldsymbol{\Sigma}_n^{-1} \boldsymbol{\Sigma}_n \boldsymbol{\Sigma}_n^{-1} \boldsymbol{\alpha}_n \nonumber \\
    +& \frac{1}{4n^2} \mathbb E^{\widehat{\mathbb{Q}}_n^+ \otimes \widehat{\mathbb{Q}}_n^-}\big[ \boldsymbol{u}^T\boldsymbol{\Sigma}_n^{-1}(\boldsymbol{\Sigma}_n - \boldsymbol{\Sigma}^*)\boldsymbol{\Sigma}_n^{-1}\boldsymbol{u} \big] \nonumber \\
    =& \boldsymbol{\lambda}^T \boldsymbol{\lambda} \big(\boldsymbol{\alpha}_n^T \boldsymbol{\Sigma}_n^{-1} \boldsymbol{\alpha}_n  \big)  + \frac{1}{n} \boldsymbol{\lambda}^T (\boldsymbol{\Sigma}_n - \boldsymbol{\Sigma}^*)\boldsymbol{\Sigma}_n^{-1} \boldsymbol{\alpha}_n + \frac{1}{4n^2} \mathbb E^{\widehat{\mathbb{Q}}_n^+ \otimes \widehat{\mathbb{Q}}_n^-}\big[ \boldsymbol{u}^T\boldsymbol{\Sigma}_n^{-1}(\boldsymbol{\Sigma}_n - \boldsymbol{\Sigma}^*)\boldsymbol{\Sigma}_n^{-1}\boldsymbol{u} \big] \nonumber 
\end{align}

Next, let's consider the second term 
\begin{align}
    &2\boldsymbol{\lambda}^T \boldsymbol{\Lambda}^* \boldsymbol{\alpha}_n \nonumber \\
    =& -2\boldsymbol{\lambda}^T \big[\boldsymbol{\lambda} \boldsymbol{\alpha}_n^T  + \frac{1}{2n} (\boldsymbol{\Sigma}_n - \boldsymbol{\Sigma}^*) \big]\boldsymbol{\Sigma}_n^{-1} \boldsymbol{\alpha}_n  \nonumber \\
    =& - 2 \boldsymbol{\lambda}^T \boldsymbol{\lambda}  \boldsymbol{\alpha}_n^T \boldsymbol{\Sigma}_n^{-1} \boldsymbol{\alpha}_n  - \frac{1}{n} \boldsymbol{\lambda}^T (\boldsymbol{\Sigma}_n - \boldsymbol{\Sigma}^*)\boldsymbol{\Sigma}_n^{-1} \boldsymbol{\alpha}_n
\end{align}

Finally, there is
\begin{align}
    &\mathbb E^{\widehat{\mathbb{Q}}_n^+ \otimes \widehat{\mathbb{Q}}_n^-} \Big[ \| \boldsymbol{\Lambda}^* \boldsymbol{u} + \boldsymbol{\lambda}  \|_2^2  \Big] \nonumber \\
    =& \boldsymbol{\lambda}^T \boldsymbol{\lambda} - \boldsymbol{\lambda}^T \boldsymbol{\lambda}  \boldsymbol{\alpha}_n^T \boldsymbol{\Sigma}_n^{-1} \boldsymbol{\alpha}_n + \frac{1}{4n^2} \mathbb E^{\widehat{\mathbb{Q}}_n^+ \otimes \widehat{\mathbb{Q}}_n^-}\big[ \boldsymbol{u}^T\boldsymbol{\Sigma}_n^{-1}(\boldsymbol{\Sigma}_n - \boldsymbol{\Sigma}^*)\boldsymbol{\Sigma}_n^{-1}\boldsymbol{u} \big] 
\end{align}

As for the trace term, it becomes
\begin{align}
    \textbf{Tr}\big( \boldsymbol \Lambda (\boldsymbol{\Sigma}_n - \boldsymbol{\Sigma}^*) \big) = - \textbf{Tr} \big(\boldsymbol{\lambda} \boldsymbol{\alpha}_n^T \boldsymbol{\Sigma}_n^{-1} (\boldsymbol{\Sigma}_n - \boldsymbol{\Sigma}^*) - \frac{1}{2n} \textbf{Tr}\Big( (\boldsymbol{\Sigma}_n - \boldsymbol{\Sigma}^*) \boldsymbol{\Sigma}_n^{-1} (\boldsymbol{\Sigma}_n - \boldsymbol{\Sigma}^*) \Big)
\end{align}

Plug in the above formula, the Wasserstein robust profile becomes
\begin{align}
    &\mathcal{R}(\boldsymbol{\alpha}^*, \boldsymbol{\Sigma}^*) \nonumber \\
    =& \sup_{\boldsymbol{\lambda} \in \mathbb R^2} \Big\{  
        \boldsymbol{\lambda}^T (\boldsymbol{\alpha}^* - \boldsymbol{\alpha}_n) - n \| \boldsymbol{\lambda} \|^2 (1 - \boldsymbol{\alpha}_n^T \boldsymbol{\Sigma}_n^{-1} \boldsymbol{\alpha}_n) + \textbf{Tr} \big(\boldsymbol{\lambda} \boldsymbol{\alpha}_n^T \boldsymbol{\Sigma}_n^{-1} (\boldsymbol{\Sigma}_n - \boldsymbol{\Sigma}^*) \Big\} \nonumber \\
        +& \frac{1}{2n} \textbf{Tr}\Big( (\boldsymbol{\Sigma}_n - \boldsymbol{\Sigma}^*) \boldsymbol{\Sigma}_n^{-1} (\boldsymbol{\Sigma}_n - \boldsymbol{\Sigma}^*) \Big) + \frac{1}{4n} \mathbb E^{\widehat{\mathbb{Q}}_n^+ \otimes \widehat{\mathbb{Q}}_n^-} \Big[ \boldsymbol{u}^T \boldsymbol{\Sigma}_n^{-1}(\boldsymbol{\Sigma}_n - \boldsymbol{\Sigma}^*)^2 \boldsymbol{\Sigma}_n^{-1}\boldsymbol{u} \Big]
\end{align}

By taking the derivative of $\boldsymbol{\lambda}$, we get the following optimal $\boldsymbol{\lambda}^*$

Here, we show that $\boldsymbol{\alpha}_n^T \boldsymbol{\Sigma}_n^{-1} \boldsymbol{\alpha}_n < 1$. Since
\begin{align}
    \boldsymbol{\Sigma}_n = \begin{pmatrix}
        \beta_n^+ & \alpha_n^+ \alpha_n^- \\
        \alpha_n^+ \alpha_n^- & \beta_n^-
    \end{pmatrix}
\end{align}
then, 
\begin{align}
    \boldsymbol{\alpha}_n^T \boldsymbol{\Sigma}_n^{-1} \boldsymbol{\alpha}_n = \frac{ (\alpha_n^+)^2 \beta_n^- - 2(\alpha_n^+ \alpha_n^-)^2 + (\alpha_n^-)^2 \beta_n^+ }{ \beta_n^+ \beta_n^- - (\alpha_n^+ \alpha_n^-)^2 }
\end{align}
Since $\beta_n^\pm$ is the second moment, then we have $\beta_n^\pm = \text{Var}^{\widehat{\mathbb{Q}}_n^\pm}(u^\pm) + (\alpha_n^\pm)^2$, where $\boldsymbol{u} = (u^+, u^-)$. Then the above expression becomes
\begin{align}
        \boldsymbol{\alpha}_n^T \boldsymbol{\Sigma}_n^{-1} \boldsymbol{\alpha}_n =& \frac{ (\alpha_n^+)^2 \beta_n^- - 2(\alpha_n^+ \alpha_n^-)^2 + (\alpha_n^-)^2 \beta_n^+ }{ \beta_n^+ \beta_n^- - (\alpha_n^+ \alpha_n^-)^2 } \nonumber \\
        =& \frac{ (\alpha_n^+)^2 \text{Var}^{\widehat{\mathbb{Q}}_n^-}(u^-) + (\alpha_n^-)^2 \text{Var}^{\widehat{\mathbb{Q}}_n^+}(u^+) }{ \text{Var}^{\widehat{\mathbb{Q}}_n^+}(u^+)\text{Var}^{\widehat{\mathbb{Q}}_n^-}(u^-) + (\alpha_n^+)^2 \text{Var}^{\widehat{\mathbb{Q}}_n^-}(u^-) + (\alpha_n^-)^2 \text{Var}^{\widehat{\mathbb{Q}}_n^+}(u^+) }
\end{align}
Since the variance is by nature to be positive, then we prove our claim

Then the above equation has optimal $\boldsymbol{\lambda}^*$
\begin{align}
    \boldsymbol{\lambda}^* = \frac{1}{2n \left( 1 - \boldsymbol{\alpha}_n^T \boldsymbol{\Sigma}_n^{-1} \boldsymbol{\alpha}_n \right)} \left( \boldsymbol{\alpha}^* - \boldsymbol{\alpha}_n + (\boldsymbol{\Sigma}_n - \boldsymbol{\Sigma}^*)\boldsymbol{\Sigma}_n^{-1} \boldsymbol{\alpha}_n \right)
\end{align}

Plug in the optimal $\boldsymbol{\lambda}^*$,  
\begin{align}
    &(\boldsymbol{\lambda}^*)^T (\boldsymbol{\alpha}^* - \boldsymbol{\alpha}_n) - n \| \boldsymbol{\lambda}^* \|^2 (1 - \boldsymbol{\alpha}_n^T \boldsymbol{\Sigma}_n^{-1} \boldsymbol{\alpha}_n) \nonumber \\
    =& \frac{ (\boldsymbol{\alpha}^* - \boldsymbol{\alpha}_n)^T (\boldsymbol{\alpha}^* - \boldsymbol{\alpha}_n) }{2n  \left( 1 - \boldsymbol{\alpha}_n^T \boldsymbol{\Sigma}_n^{-1} \boldsymbol{\alpha}_n \right) } - \frac{\boldsymbol{\alpha}_n^T \boldsymbol{\Sigma}_n^{-1} (\boldsymbol{\Sigma}_n - \boldsymbol{\Sigma}^*)^2 (\boldsymbol{\alpha}^* - \boldsymbol{\alpha}_n)   }{2n  \left( 1 - \boldsymbol{\alpha}_n^T \boldsymbol{\Sigma}_n^{-1} \boldsymbol{\alpha}_n \right)} \nonumber \\ 
    -& \frac{ \left(  \boldsymbol{\alpha}^{*, T} - \boldsymbol{\alpha}_n^T  + \boldsymbol{\alpha}_n^T \boldsymbol{\Sigma}_n^{-1} (\boldsymbol{\Sigma}_n - \boldsymbol{\Sigma}^*) \right) \left(  \boldsymbol{\alpha}^{*} - \boldsymbol{\alpha}_n  + (\boldsymbol{\Sigma}_n - \boldsymbol{\Sigma}^*) \boldsymbol{\Sigma}_n^{-1} \boldsymbol{\alpha}_n^{-1} \right)  }{4n  \left( 1 - \boldsymbol{\alpha}_n^T \boldsymbol{\Sigma}_n^{-1} \boldsymbol{\alpha}_n \right)} \nonumber \\
    =& -\frac{ (\boldsymbol{\alpha}^* - \boldsymbol{\alpha}_n)^T (\boldsymbol{\alpha}^* - \boldsymbol{\alpha}_n) }{4n  \left( 1 - \boldsymbol{\alpha}_n^T \boldsymbol{\Sigma}_n^{-1} \boldsymbol{\alpha}_n \right) } - \frac{ \boldsymbol{\alpha}_n^T \boldsymbol{\Sigma}_n^{-1}  (\boldsymbol{\Sigma}_n - \boldsymbol{\Sigma}^*)^2
        \boldsymbol{\Sigma}_n^{-1}  \boldsymbol{\alpha}_n }{4n  \left( 1 - \boldsymbol{\alpha}_n^T \boldsymbol{\Sigma}_n^{-1} \boldsymbol{\alpha}_n \right) }
\end{align}

Also, for the third term in the above equation, there is
\begin{align}
    &\textbf{Tr} \big(\boldsymbol{\lambda}^* \boldsymbol{\alpha}_n^T \boldsymbol{\Sigma}_n^{-1} (\boldsymbol{\Sigma}_n - \boldsymbol{\Sigma}^*) \nonumber \\
    =& \textbf{Tr} \left\{ \frac{1}{2n  \left( 1 - \boldsymbol{\alpha}_n^T \boldsymbol{\Sigma}_n^{-1} \boldsymbol{\alpha}_n \right)} \Big(   \boldsymbol{\alpha}^* - \boldsymbol{\alpha}_n + (\boldsymbol{\Sigma}_n - \boldsymbol{\Sigma}^*)\boldsymbol{\Sigma}_n^{-1} \boldsymbol{\alpha}_n  \Big) \boldsymbol{\alpha}_n^T \boldsymbol{\Sigma}_n^{-1} (\boldsymbol{\Sigma}_n - \boldsymbol{\Sigma}^*)  \right\} \nonumber \\
    =& \textbf{Tr} \left\{ \frac{1}{2n  \left( 1 - \boldsymbol{\alpha}_n^T \boldsymbol{\Sigma}_n^{-1} \boldsymbol{\alpha}_n \right)}  \boldsymbol{\alpha}_n^T \boldsymbol{\Sigma}_n^{-1} (\boldsymbol{\Sigma}_n - \boldsymbol{\Sigma}^*)    \Big(   \boldsymbol{\alpha}^* - \boldsymbol{\alpha}_n + (\boldsymbol{\Sigma}_n - \boldsymbol{\Sigma}^*)\boldsymbol{\Sigma}_n^{-1} \boldsymbol{\alpha}_n  \Big) \right\} \nonumber \\
    =& \frac{\boldsymbol{\alpha}_n^T \boldsymbol{\Sigma}_n^{-1} (\boldsymbol{\Sigma}_n - \boldsymbol{\Sigma}^*)  (   \boldsymbol{\alpha}^* - \boldsymbol{\alpha}_n )}{2n  \left( 1 - \boldsymbol{\alpha}_n^T \boldsymbol{\Sigma}_n^{-1} \boldsymbol{\alpha}_n \right)} + \frac{  \boldsymbol{\alpha}_n^T \boldsymbol{\Sigma}_n^{-1} (\boldsymbol{\Sigma}_n - \boldsymbol{\Sigma}^*)^2  \boldsymbol{\Sigma}_n^{-1} \boldsymbol{\alpha}_n  }{2n  \left( 1 - \boldsymbol{\alpha}_n^T \boldsymbol{\Sigma}_n^{-1} \boldsymbol{\alpha}_n \right)} \nonumber
\end{align}

Then, the robust profile becomes 
\begin{align}
     &\mathcal{R}(\boldsymbol{\alpha}^*, \boldsymbol{\Sigma}^*) \nonumber \\
    =& \frac{ (\boldsymbol{\alpha}^* - \boldsymbol{\alpha}_n)^T (\boldsymbol{\alpha}^* - \boldsymbol{\alpha}_n) }{4n  \left( 1 - \boldsymbol{\alpha}_n^T \boldsymbol{\Sigma}_n^{-1} \boldsymbol{\alpha}_n \right) } + \frac{\boldsymbol{\alpha}_n \boldsymbol{\Sigma}_n^{-1} (\boldsymbol{\Sigma}_n - \boldsymbol{\Sigma}^*)^2 \boldsymbol{\Sigma}_n^{-1} \boldsymbol{\alpha}_n   }{4n  \left( 1 - \boldsymbol{\alpha}_n^T \boldsymbol{\Sigma}_n^{-1} \boldsymbol{\alpha}_n \right)} + \frac{\boldsymbol{\alpha}_n^T \boldsymbol{\Sigma}_n^{-1} (\boldsymbol{\Sigma}_n - \boldsymbol{\Sigma}^*)  (   \boldsymbol{\alpha}^* - \boldsymbol{\alpha}_n )}{2n  \left( 1 - \boldsymbol{\alpha}_n^T \boldsymbol{\Sigma}_n^{-1} \boldsymbol{\alpha}_n \right)} \nonumber \\ 
     +& \frac{1}{2n} \textbf{Tr}\Big( (\boldsymbol{\Sigma}_n - \boldsymbol{\Sigma}^*) \boldsymbol{\Sigma}_n^{-1} (\boldsymbol{\Sigma}_n - \boldsymbol{\Sigma}^*) \Big) + \frac{1}{4n} \mathbb E^{\widehat{\mathbb{Q}}_n^+ \otimes \widehat{\mathbb{Q}}_n^-} \Big[ \boldsymbol{u}^T \boldsymbol{\Sigma}_n^{-1}(\boldsymbol{\Sigma}_n - \boldsymbol{\Sigma}^*)^2 \boldsymbol{\Sigma}_n^{-1}\boldsymbol{u} \Big]
\end{align}
\end{proof}

\end{appendices}

\end{document}